\DeclarePairedDelimiter\floor{\lfloor}{\rfloor}
\newcommand\equationspace{\vspace{-1.5ex}}
\newcommand{\h}{\mathbf{h}} 
\newcommand{\e}{\mathbf{e}}
\newcommand{\Hyp}{\mathcal{H}}
\newcommand{\A}{\boldsymbol{\Lambda}} 
\newcommand{\mean}{\boldsymbol{\mu}} 
\newcommand{\Cov}{\boldsymbol{\Sigma}}
\newcommand{\CN}{\mathcal{C}\mathcal{N}}
\newcommand{\ID}{\text{ID}}
\newcommand{\REQ}{\text{REQ}}
\newcommand{\CSI}{\text{CSI}}
\newcommand{\IMGMT}{\mathcal{I}_{\text{MGMT}}}
\newcommand{\IDTP}{\mathcal{I}_{\text{DTP}}}
\newcommand{\NRx}{N_{\text{Rx}}}
\newcommand{\pFA}{p_{\text{FA}}}
\newcommand{\pMD}{p_{\text{MD}}}
\newcommand{\pAttack}{p_{\text{Attack}}}
\newcommand{\KRice}{K_{\text{Rice}}}
\newcommand{\KRiceE}{K_{\text{Rice},E}}
\newcommand{\Kd}{K_{\text{d}}}
\newcommand{\Kactive}{K_{\text{Active}}}
\newcommand{\Nframe}{N_{\text{Frame}}}
\newcommand{\prob}{\mathbb{P}}
\newcommand{\E}{\mathbb{E}}
\newcommand{\MelA}{\mathcal{M}_{\mathcal{A}}}
\newcommand{\Ker}{\mathcal{K}}
\newcommand{\KRC}{K_{\text{RC}}}
\newcommand{\KCN}{K_{\text{CN}}}
\newcommand{\MelS}{\mathcal{M}_{\mathcal{S}}}
\newcommand{\Dactive}{D_{\text{Active}}}
\newcommand{\Dsybil}{D_{\text{Sybil}}}
\newcommand{\Ksybil}{K_{\text{Sybil}}}
\DeclareMathOperator{\argmin}{arg\,min}
\theoremstyle{plain}
\newtheorem{thm}{Theorem}
\newtheorem{lemma}{Lemma}
\newtheorem{remark}{Remark}
\newtheorem{cor}{Corollary}
\newcommand{\numberthis}{\addtocounter{equation}{1}\tag{\theequation}}
\DeclareMathOperator{\Tr}{tr}
\title{Physical Layer Authentication in Mission-Critical MTC Networks: A Security and Delay Performance Analysis}
\author{
	\IEEEauthorblockN{Henrik Forssell, Ragnar Thobaben, Hussein Al-Zubaidy, James Gross}
   	\\\IEEEauthorblockA{Department of Information Science and Engineering, KTH Royal Institute of Technology, Stockholm, Sweden}
	\\\{hefo, ragnart, hzubaidy, jamesgr\}@kth.se
	\thanks{This work is supported in part by the Swedish Civil Contingencies Agency, MSB, through the CERCES project.}
}
\begin{document}
\maketitle

\begin{abstract}
We study the detection and delay performance impacts of a feature-based physical layer authentication (PLA) protocol in mission-critical machine-type communication (MTC) networks. The PLA protocol uses generalized likelihood-ratio testing based on the line-of-sight (LOS), single-input multiple-output channel-state information in order to mitigate impersonation attempts from an adversary node. We study the detection performance, develop a queueing model that captures the delay impacts of erroneous decisions in the PLA (i.e., the false alarms and missed detections), and model three different adversary strategies: data injection, disassociation, and Sybil attacks. Our main contribution is the derivation of analytical delay performance bounds that allow us to quantify the delay introduced by PLA that potentially can degrade the performance in mission-critical MTC networks. For the delay analysis, we utilize tools from stochastic network calculus. Our results show that with a sufficient number of receive antennas (approx. 4-8) and sufficiently strong LOS components from legitimate devices, PLA is a viable option for securing mission-critical MTC systems, despite the low latency requirements associated to corresponding use cases. Furthermore, we find that PLA can be very effective in detecting the considered attacks, and in particular, it can significantly reduce the delay impacts of disassociation and Sybil attacks.
\end{abstract}

\begin{IEEEkeywords}
Delay performance, low-latency machine-type communication, wireless physical layer security, physical layer authentication.
\end{IEEEkeywords}

\vspace{-2ex}
\section{Introduction}
\label{sec:introduction}
\IEEEPARstart{A}{s} mission-critical machine-type communication (MTC) emerges as a new approach to interconnect cyber-physical infrastructures, also new requirements on security features arise. Mission-critical machine-type communication targets at low latencies and high transmission reliabilities, in order to realize new use cases for instance arising in industrial automation. Thus, while in human-oriented communication data confidentiality followed by integrity form the utmost priorities (while service availability and security overhead typically have less relevance), the priorities change in the mission-critical setting. In detail, the order of concern is reversed~\cite{3gpp.22.804}: Service availability has highest priority since automation applications are typically supposed to run uninterrupted over long time spans. The second highest priority has message integrity, as in a closed control loop it is of vital importance that sensor and actuation information is not altered during transmission, while it also must be assured that the received data indeed stems from the claiming source. Finally, confidentiality is of least importance, as in automation applications the reading of sensor and actuation information poses only little threat to the controlled plant. Paired with the general requirement for low transmission latencies, these inverted security priorities are challenging, as traditionally integrity is assured through crypto schemes on the higher layers, which comes with significant computational complexities.

Physical layer authentication (PLA) has been proposed as a lightweight alternative for crypto security for authentication in reliable MTC communications~\cite{Weinand2017_2}. In general, PLA schemes perform hypothesis testing based on dedicated features of the communication pair like, e.g., the location-specific channel frequency response~\cite{Xiao2007} or a device-specific local oscillator offset~\cite{Hou2014} to determine if transmissions originate from legitimate sources. The advantage of this method is that messages can be authenticated quickly at the physical layer, without relying on cryptographic methods at higher layers and with slim-to-none security overhead. However, such schemes also come with drawbacks. First of all, due to the hypothesis testing PLA inevitably results in false alarms from time to time (i.e., some legitimate messages will be erroneously rejected) which can necessitate a retransmission. Furthermore, missed detections (i.e., accepting messages from an adversary) can occur if communication is subject to an impersonation attack. Thus, despite the complexity advantages, PLA also comes with costs which potentially can be significant in the context of mission-critical MTC.  
This raises the question how these costs (i.e., false positives and missed detections) potentially impact in particular the delay performance of a mission-critical MTC system.

Related work so far has largely not been addressing this question.
PLA for mission-critical MTC is proposed for instance in~\cite{Weinand2017,Weinand2017_2} but without considering the impact on the delay. 
In \cite{Wang2016}, the reduction in delay from removing authentication-induced processing delays in cell-handovers by using PLA is simulated.
However, this paper does not focus on MTC and additionally does not take false alarms of PLA into account.
Ozmen \textit{et al.} considers the delay-sensitive performance of a communication system under information-theoretic secrecy~\cite{Ozmen2017,Ozmen2018}. 
Delay in these works is characterized through the concept of effective capacity, which essentially allows the approximation of queuing-related performance metrics like the backlog or latency.
Furthermore in \cite{Naghibi2017}, the delay performance of a Rayleigh fading wiretap channel is studied using \textit{stochastic network calculus} for queueing analysis.
All papers~\cite{Ozmen2017,Ozmen2018,Naghibi2017} apply queueing analysis tools to study the delay impacts of different physical layer security techniques, however, none of them consider PLA.

In this work, we address the issue of delay analysis, and thus the cost, of PLA for mission-critical MTC.
We consider a centralized MTC network running a mission-critical application in which devices need to deliver data to the access point reliably and with low latency.
In the considered network, we introduce a standard generalized-likelihood-ratio test PLA scheme, which we extend to take multiple-message authentication into account. 
We model several strategies that the adversary can use, namely data injection, disassociation, and Sybil attacks, and analyze the detection performance for each scenario. 
To derive the delay performance impacts, we develop link-level queueing models that take the PLA errors and actions of the adversary into account. 
For queuing analysis, we employ tools from stochastic network calculus~\cite{Fidler2015,Schiessl2015}.
This work significantly extend the scope of our previous study \cite{Forssell2017} of delay impacts of PLA that only considered a single antenna system without an active attacker.

The contributions of our paper are the following: We derive delay performance bounds for MTC links where PLA is used for combating various attack strategies. We develop models for how data injection, disassociation, and Sybil attacks are launched against a MTC network and their impact on the queueing delay performance. With respect to stochastic network calculus, we provide an approximation to a previously unsolved mathematical problem: an upper bound on the delay violation probability over a Rice fading single-input multiple-output channel. From our results, we conclude that PLA, under relatively strong line-of-sight conditions and with sufficient number of receive antennas, can indeed provide high security in a mission-critical application. We also show that PLA effectively reduce the impact of disassociation and Sybil attacks at the cost of an approximately constant increase in delay violation probability. Thus, our results show that despite some costs, PLA promises to be an effective scheme in ensuring message integrity even in mission-critical MTC systems.

The rest of the paper is organized as follows: Section~\ref{sec:preliminaries} introduces the system assumptions and our problem formulation. In Section~\ref{sec:modeling}, we describe the attacker models and their impact on the queueing model. Section~\ref{sec:queueing_analysis} is devoted to deriving the delay performance bounds using tools from stochastic network calculus. In Section~\ref{sec:results}, we present our numerical results, and Section~\ref{sec:conclusion} concludes the paper.

\textit{Notation:} Matrices are represented by bold capital symbols $\mathbf{X}$, and $\mathbf{X}^T$ and $\mathbf{X}^\dag$ denote the matrix transpose and conjugate transpose, respectively. We let $\Tr(\mathbf{X})$ denote the trace of a matrix. Bold symbols $\mathbf{x}$ represents vectors with entries $x_i$ and $\mathbf{I}_N$ denotes the $(N\times N)$ identity matrix. We let $\|\mathbf{x}\|=\sqrt{|x_1|^2+...+|x_n|^2}$ be the Euclidian norm. For an event $E$, we let $\mathbb{P}(E)$ and $\mathbb{I}(E)$ denote the probability and indicator function, respectively. For a random variable $X$, $\mathbb{E}[X]$ denotes its expected value and $f_X(x)$ and $F_X(x)$ its probability density and cumulative distribution function, respectively. We let $\CN(\mean,\Cov)$ represent the multivariate complex Gaussian distribution with mean $\mean$ and covariance matrix $\Cov$, $\mathcal{N}(\mean,\Cov)$ the corresponding real-valued Gaussian distribution, $\chi^2_k$ a central $\chi^2$ distribution with $k$ degrees of freedom, and $\chi_k^2(\lambda)$ a non-central $\chi^2$ distribution with $k$ degrees of freedom and non-centrality parameter $\lambda$.

\vspace{-0.5ex}
\section{Preliminaries}
\label{sec:preliminaries}

\begin{figure*}[h]
\centering
\psfrag{A}[][]{\large Multiple-Antenna Access Point}
\psfrag{B}[][]{\scriptsize \textbf{MAC}}
\psfrag{C}[][]{\scriptsize Resource Scheduling}
\psfrag{D}[][]{\scriptsize \hspace{-8pt} \textbf{Physical Layer}}
\psfrag{E}[][]{\scriptsize Feature Bank}
\psfrag{F}[][]{\scriptsize Feature-Based Physical Layer Authentication}
\psfrag{G}[][]{\scriptsize Regular Physical Layer Processing}
\psfrag{H}[][]{\scriptsize Access Point}
\psfrag{I}[][]{\scriptsize Wireless MTC Device}
\psfrag{J}[][]{\scriptsize Adversary}
\includegraphics[width=1\textwidth]{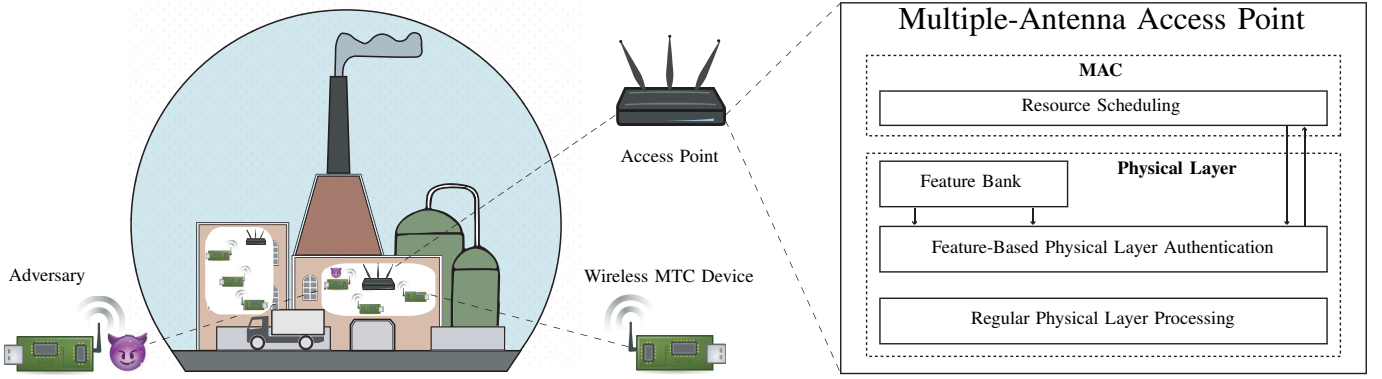}
 \vspace{-1ex}
\caption{Single-antenna MTC devices (e.g., wireless sensors in a critical monitoring application) communicating in uplink to a multiple-antenna access point. The access point is equipped with a feature-based PLA protocol.}
\label{fig:concept_picture}
\vspace{-3ex}
\end{figure*}

In this section, we present a centralized MTC network model consisting of $\Kd$ wireless devices communicating uplink data to an access point, as depicted in Fig.~\ref{fig:concept_picture}. The network is assumed to run a mission-critical application in which the MTC devices buffer data (e.g., sensor measurements) that need to be delivered reliably to the access point with minimal delay, as for example in motion control or generally in factory automation. Furthermore, as depicted in Fig.~\ref{fig:concept_picture}, we assume that there is an adversary present in the vicinity of the network, attempting to disturb the system using stealthy wireless impersonation attacks that are compliant with the typical behavior of legitimate devices within the network (e.g., sending payloads, data or disconnection requests). For protection against such attacks, the access point is using a feature-based physical layer authentication (PLA) protocol that compares the channel-state information associated with each transmission to a pre-stored feature bank. The access point is assumed to be equipped with $\NRx$ antennas, both in order to improve the PLA detection performance and to improve capacity, while the MTC devices (e.g., small sensors) are assumed to have single antennas. The stationary feature bank consists of the statistics of the phased-array antenna responses from each device to the multiple-antenna access point, and we assume that the devices are deployed such that a line-of-sight (LOS) path to the access point is available. 

\vspace{-0.5ex}
\subsection{Medium Access and Physical Layer}
\label{sec:phy_mac}
We assume that the MTC devices access the wireless medium in a frame-based structure, each beginning with a beacon transmitted by the access point for synchronization, followed by a management (MGMT) period where devices can make various requests\footnote{The MGMT phase is based on contention access; however, we assume that collisions are handled appropriately such that we can neglect their impact.}. A device can request connecting to the access point (CN), disconnecting (DCN), or resources for transmission of data payload (DTA). The allocation of resources is then communicated to the devices in a broadcast period (BP), followed by the data transmission period (DTP) where devices transmit buffered data. This medium-access model is similar to existing standards such as LTE~\cite{Mehaseb2016} and beacon-enabled IEEE 802.14e~\cite{Kurunathan2018}. We let $\IMGMT(k)$ denote the set of request messages received in the MGMT period in frame $k$, each associated with a request $\REQ(m)\in\{\text{DTA},\text{CN},\text{DCN}\}$ and a device identifier $\ID(m)\in\{1,\cdots,K_d\}$ (e.g., an identification code such as a MAC address). We denote by $\IDTP(k)$ the set of devices that are granted DTP resources in frame $k$ and we assume that the access point expects at most one request from each device.

We assume the DTP has a fixed length of $\Nframe$ complex symbols that are divided by TDMA to the devices in $\IDTP(k)$. A fair division of resources is assumed, where the number of symbols each device gets allocated in frame $k$ is denoted by\footnote{Note that $N_k$ in general is a random variable depending on the number of users allocated in the frame, and that $N_k $ can get very small if many devices request resources at the same time.}

\equationspace
\begin{equation}
N_k= \floor*{\frac{\Nframe}{|\IDTP(k)|}},
\label{eq:N_k}
\end{equation}
where $\floor{x}$ denotes the largest integer smaller than $x$. We denote the $(\NRx\times\Nframe)$ complex symbols received at the access point in frame $k$ by $\mathbf{Y}_k = [\mathbf{Y}_{k,i_1} \cdots \mathbf{Y}_{k,i_{|\IDTP|}}]$ and let $\mathbf{y}_{k,i}(n)$ denote the $n$th column of $\mathbf{Y}_{k,i}$ (i.e., the observation of the $n$th symbol received from device $i$ in frame $k$). The single-input multiple-output (SIMO) channel is modeled according to

\equationspace
\begin{equation}
\mathbf{y}_{k,i}(n) = \h_{k,i}x_{k,i}(n) + \mathbf{w}_{k,i}(n), \label{eq:simo}
\end{equation}
for $n\in \{1,\cdots,N_k\}$, where $\h_{k,i}$ represent the channel vector between device $i$ and the access point in frame $k$, $x_{k,i}(n)$ are the transmitted data symbols, and $\mathbf{w}_{k,i}(n)\sim \CN(\mathbf{0},N_0\mathbf{I}_{\NRx})$ is the additive noise represented by a circular symmetric complex Gaussian random vector. We assume that  $\mathbb{E}[\|\h_{k,i}\|^2] = P_i\NRx$ where $P_i$ represent the average power received per antenna from device $i$. We model the channel $\h_{k,i}$ as a narrowband SIMO Rice fading channel, i.e., $\h_{k,i}\sim \CN(\mean_i,\Cov_i)$ with $\mean_i$ representing the LOS component and covariance matrix $\Cov_i$ representing the fading. The covariance matrix is given by $\Cov_i = \frac{P_i}{K_{\text{Rice}}+1}\A$, where $[\A]_{i,j} = \rho^{|i-j|}$ is an $(\NRx \times \NRx)$ matrix, $\rho$ is a correlation coefficient, and $\KRice$ is a common Rice factor experienced by all antennas and all devices in the network. Furthermore, we assume that the frame period is shorter than the coherence time of the channel so that the channel realizations $\h_{k,i}$ can be assumed to be constant within a frame,  independent from frame to frame, and independent among the MTC devices.

For device $i$, positioned at distance $d_i$ and with angle of arrival (AoA) $\Phi_i$ relative to the receiver antenna array, the channel mean (i.e., the LOS component) is modeled as a phased-array antenna $\mean_i = a e^{-\frac{j2\pi d_i}{\lambda_c}}\e(\Omega_i)$, where $\lambda_c$ is the carrier wavelength, $\Omega_i = \cos(\Phi_i)$ is the directional cosine, $a=\|\mean_i\|$, and $\e(\Omega_i)$ is the unit spatial signature given by

\equationspace
\begin{equation}
\e(\Omega_i) = \frac{1}{\sqrt{\NRx}} \begin{bmatrix} z^0, z^{\Omega_i}, \cdots,z^{(\NRx-1)\Omega_i}\end{bmatrix}
\label{eq:norm_spatial_signature}
\end{equation}
in terms of the complex number $z=e^{-j2\pi\Delta_r}$, where $\Delta_r$ is the antenna spacing (normalized by the wavelength)~\cite{Tse2005}. From normalization of $\mathbb{E}[\|\h_{k,i}\|^2]$ we get $a = \sqrt{\frac{P_i \NRx K_{\text{Rice}}}{K_{\text{Rice}}+1}}$, and we assume the received power follows as $P_i = P_0 d_i^{-\beta/2}$ where $\beta$ is a path-loss exponent, $P_0$ is the transmit power, and $d_i$ is the distance. Additionally, in the following we normalize the noise power spectral density $N_0=1$ such that $P_i\NRx$ also represents the average received signal-to-noise ratio (SNR) on the $i$th link.

\begin{remark} The assumption of a narrow-band slow-fading LOS channel may appear as too restrictive at a first glance. However, it is relevant in scenarios with low or no device mobility and where the MTC deployment has been carefully planned, for instance, to use LOS beamforming for physical layer security~\cite{Abdelaziz2016}. Furthermore, these conditions allow us to upper bound the delay performance; if delay requirements are violated in this model, PLA will not be applicable for other models either. 
\end{remark}

\vspace{-0.5ex}
\subsection{Feature-Based Physical Layer Authentication}
\label{sec:feature_based_authentication}

With an adversary present, the validity of the message IDs are uncertain and the access point needs to determine their legitimacy. For PLA based on the observed channel states, the access point is assumed to have access to a feature bank consisting of the channel distributions $\CN(\mean_i,\Cov)$ that are associated with each legitimate channel and are used for hypothesis testing. In a real system, the access point can obtain the feature bank through learning based on legitimate transmissions (c.f., \cite{Xiao2016}). In this work, however, we assume that the distributions in the feature bank are perfectly known and the process by which they are obtained is omitted. For a received set of messages $\mathcal{I}=\{m_1,\cdots,m_M\}$, we denote by $\tilde{\h}_{m_i} = \CSI(m_i)$ the observed SIMO channel state associated with each message $m_i$. In general, this channel state is an estimate with limited precision. However, to simplify the analysis we assume perfect channel-state knowledge in the following. Furthermore, we assume that PLA is applied to $\mathcal{I} \in \{\IMGMT,\IDTP\}$, i.e., MGMT requests and DTP data payloads are authenticated separately. 

We consider now the case when $L$ messages share the same ID (e.g., due to multiple impersonated messages injected by an adversary). The PLA procedure divides the set $\mathcal{I}$ into subsets $\mathcal{I}_i = \{m\in\mathcal{I}: \ID(m) = i\}$ of messages with the same ID, each authenticated independently. To test the legitimacy of the messages in the set $\mathcal{I}_i$, the access point constructs a $L+1$-ary hypothesis test. We here denote by $\Hyp_l$ for $l\in\{1,\cdots,L\}$, the disjoint hypotheses that message $m_l$ is authentic, i.e., that we believe $\tilde{\h}_{m_l} \sim \CN(\mean_{\ID(m_l)},\Cov_{\ID(m_l)})$, and by $\Hyp_0$ the hypothesis that no message in $\mathcal{I}_L$ is authentic. The decision of $\Hyp_l$ results in accepting $m_l$ and rejecting the rest, while the decision of $\Hyp_0$ results in rejecting all messages in $\mathcal{I}_i$, since the authentication is predicated on that the access point expects only one message per legitimate device. The access point decides between the $L$ messages through

\equationspace
\begin{equation}
	 d_i(\tilde{\h}_{m_l}) \mathop{\gtrless}_{\mathcal{H}_l}^{\mathcal{H}_0} T, \quad\text{with } m_l = \underset{m=m_1,\cdots,m_L}{\argmin} d_i(\tilde{\h}_m),
	\label{eq:L_message_hyp_test_2}
\end{equation}
where $d_i(\cdot)$ is a discriminant function associated with the channel feature of the device with ID $i$, given by $d_i(\tilde{\h}_m) = 2(\tilde{\h}_m-\mean_i)^\dag\Cov^{-1}(\tilde{\h}_m-\mean_i)$. The minimization of the righthand side of \eqref{eq:L_message_hyp_test_2} is to be viewed as choosing the maximum-likelihood (ML) decision (the discriminant function $d_i(\cdot)$ is also the log-likelihood of the observation given the legitimate distribution) while the threshold decision in the lefthand side determines if the ML decision is authentic.

\paragraph*{Single message authentication ($L=1$)}
\label{par:sma}
The $L$ message hypothesis test in \eqref{eq:L_message_hyp_test_2} is an extension of the standard generalized-likelihood-ratio test (GLRT), used for PLA when deciding upon a multi-dimensional complex Gaussian feature such as a multi-carrier frequency response \cite{Xiao2008} or a channel impulse response~\cite{Mahmood2017}. Note that when \eqref{eq:L_message_hyp_test_2} is reduced to $L=1$ (i.e., only a single message with $\ID = i$ is received), the hypothesis test becomes $d_i(\tilde{\h}_m) \mathop{\gtrless}_{\mathcal{H}_1}^{\mathcal{H}_0} T$, where $\Hyp_1$ represents that the message is legitimate and $\Hyp_0$ represents that the message stems from an adversary.

\subsection{Adversarial Strategies}
\label{sub:adversary_assumptions_and_error_events}

In this paper, we assume that a single attacker is present in the system, referred to as Eve, having a single antenna, located at distance $d_E$ and with AoA $\Phi_E$ relative to the access point. We model Eve's channel similarly to the legitimate channels with Rice factor $\KRiceE$ and denote Eve's channel realization in frame $k$ by $\h_{k,E}~\sim \CN(\mean_E,\Cov_E)$, where $\mean_E = a_E e^{-\frac{j2\pi d_E}{\lambda_c}}\e(\Omega_E)$ with the normalized spatial signature given in \eqref{eq:norm_spatial_signature}. With this representation, we can model both the case when Eve is an external device or when the attack is launched from a compromised device within the network by letting $\mean_E=\mean_i$ and $\Cov_E=\Cov$ for some legitimate device $i$. The power received from Eve's transmissions is assumed to be $P_E=P_0d_E^{-\beta/2}$.

Given Eve's ability to send messages with fraudulent IDs, we differentiate four cases of adversary behavior:

\paragraph{Baseline} 
\label{par:baseline}
Eve is present, but inactive, and the performance of the system is only affected by false alarms. The baseline scenario models the impact of introducing the PLA protocol in the system when no attacks are attempted.

\paragraph{Data Injection Attack} 
\label{par:steady_state}
Eve is sending DTA requests impersonating a legitimate MTC device. Once successful, Eve gets DTP resources and transmits false data with the aim of harming the underlying application (e.g., drive a control system into a dangerous state by introducing fake sensor or actuation signals). In our work, we do not model the impact of the data injection attack on the application; however, metrics like missed detection rate (see Section~\ref{subsec:far_mdr} and~\ref{sec:data_injection}) measure Eve's success-rate under such attacks, and the number of resources $N_k$ each device gets scheduled will be affected.

\paragraph{Sybil Attack} 
\label{par:sybil_attack}

Eve transmits multiple DTA requests with fraudulent IDs, referred to as Sybil IDs/devices, with the goal of depleting resources available to the other legitimate devices~\cite{Xiao2009_Sybil}. In a Sybil attack, we assume that Eve targets a set of inactive devices $\Dsybil\subset \{1,\cdots,K_d\}$ that are not transmitting in the frame and sends DTA requests with the corresponding IDs. Note that it does not make sense for Eve to target active devices in this attack since they will already transmit DTA requests. With each successful Sybil ID, $N_k$ in \eqref{eq:N_k} is reduced which degrades the performance of the other links in the network.

\paragraph{Disassociation Attack} 
\label{par:disassociation_attack}
Eve targets a particular device and sends fraudulent requests to disassociate from the access point (DCN) with the corresponding device's ID. If successful, Eve disconnects the legitimate device which needs to reconnect, a process we model as being disconnected for $\KRC$ frames (e.g., due to management processes such as generating session keys).

The impersonation attacks that we consider can be launched by external entities (e.g., an attacker positioned in close proximity to the system, using a stolen MTC device or a software defined radio unit) or internal devices whose behavior has been hijacked by malicious code. Our attacker model allows us to model both cases by modifying the assumptions on Eve's channel. We note, however, that Sybil attacks are generally assumed to originate from internal devices that are compromised \cite{Xiao2009_Sybil}.

\vspace{-0.5ex}
\subsection{False Alarm and Missed Detection Rates}
\label{subsec:far_mdr}
Here, we summarize the error events and corresponding probabilities for the single message authentication, which are standard results (c.f., \cite{Xiao2008} for proofs). In the $L=1$ message case, two error events can occur: (i) a \textit{false alarm} when a legitimate message is rejected; and (ii) a \textit{missed detection} when an illegitimate message is accepted. Under the legitimate hypothesis $\Hyp_1$, we have $d_i(\tilde{\h}_m)\sim \chi^2_{2\NRx}$ and the false alarm rate is

\equationspace
\begin{equation}
\pFA(T) = \mathbb{P}(d_i(\tilde{\h}_m) >T|\mathcal{H}_1) = 1-F_{\chi^2_{2\NRx}}(T),
\label{eq:pFA}
\end{equation}
where $F_{\chi^2_{2\NRx}}(\cdot)$ is the cumulative distribution function (CDF) of a $\chi^2$ distribution with $2\NRx$ degrees of freedom. Observe that for a given choice of threshold $T$, the false alarm rate is equal across all device IDs $i$, independently of our assumptions on Eve. In practice, the PLA could be designed with different thresholds $T_i$ for different devices. However, in order to simplify the analysis we assume a constant threshold $T$. Under $\Hyp_0$ (i.e., Eve is sending the message $m$ with $\ID(m)=i$), given that Eve's channel covariance-matrix is of the form $\Cov_E=\frac{P_E}{1+\KRiceE}\A$, we have $d_i(\tilde{\h}_m)\sim \lambda_i \chi^2_{2\NRx}(\nu_i)$, where $\lambda_i = \frac{P_E(1+\KRice)}{P_i(1+\KRiceE)}$ and $\nu_i$ is the non-centrality parameter. Hence, the missed detection rate is

\equationspace
\begin{equation}
\pMD(i,T) = \mathbb{P}(d_i(\tilde{\h}_m)<T|\mathcal{H}_0) = F_{\chi^2_{2\NRx}(\nu_i)}(T/\lambda_i),
\label{eq:pMD}
\end{equation}
where $F_{\chi^2_{2\NRx}(\nu_i)}(\cdot)$ is the CDF of a non-central $\chi^2$ distribution with $2\NRx$ degrees of freedom and non-centrality parameter $\nu_i = 2(\mean_E-\mean_i)^\dag\Cov_E^{-1}(\mean_E-\mean_i)$. From this we can note that the missed detection rate varies with the device $i$ that Eve tries to impersonate. Error analysis for PLA with $L>1$ has to our knowledge not been studied before. In Section~\ref{sec:data_injection}, we provide bounds on the missed detection rate for $L=2$ and show that this case will suffice for the delay performance analysis under the considered attack strategies.

\vspace{-0.5ex}
\subsection{Delay Performance Metric}
\label{sec:delay_performance_metric}

As mentioned in Section~\ref{sec:introduction}, the use of PLA for improved security might have unintended consequences on the system's ability to meet delay requirements. To study such delay performance issues, we introduce infinite-buffer queues that model the flow of data from each MTC device to the access point. The queueing model is described by the bivariate stochastic processes
\begin{equation*}
A_i(\tau,t) = \sum_{k=\tau}^{t}a_k^{(i)},  \hspace{3pt}
D_i(\tau,t) = \sum_{k=\tau}^{t}d_k^{(i)}, 
\end{equation*}
representing the cumulative arrivals to and departures from the queue in the time interval $[\tau,t)$ for all $0 \le \tau \le t$. In frame $k$, $a_k^{(i)}$ represents the instantaneous arrivals to the $i$th MTC device buffer measured in bits (e.g., incoming sensor measurements), and  $d_k^{(i)}$ represent the instantaneous departures from the $i$th queue (i.e., information successfully transmitted to the access point). The ability to transfer data from the buffer queue to the destination at the access point is characterized by the cumulative service process $S_i(\tau,t) = \sum_{k=\tau}^{t}s_k^{(i)}$. Considering that a device is assigned resources, we assume that the transmitter chooses a coding rate $R_k^{(i)}$, and transmits $s_k^{(i)} = N_kR_k^{(i)}$ encoded information bits over the SIMO channel. Furthermore, we introduce the Bernoulli random variable $X_{k}^{(i)}$, indicating if resources are scheduled to device $i$. This results in the general service model

\equationspace
\begin{equation}
s_k^{(i)} = \begin{cases}
N_kR_k^{(i)}, & \text{if } X_{k}^{(i)}= 1  \\
0  & \text{if } X_{k}^{(i)}= 0.
\end{cases}
\label{eq:service_model}
\end{equation}
We use the Shannon capacity $R_k^{(i)} = \log_2(1+\gamma_{k,i})$ as a proxy for the amount of bits per channel use that can be transmitted over the channel. Assuming the access point has perfect channel state information and uses maximum-ratio combining for the channel model \eqref{eq:simo}, the instantaneous SNR is given by $\gamma_{k,i} = \frac{\|\h_{k,i}\|^2}{N_0}$.

A widely used measure on the queueing system's ability to meet delay requirements is the \textit{delay violation probability}~\cite{Zubaidy2016}. The queueing delay at time point $t$ is defined as

\equationspace
\begin{equation}
W_i(t) \triangleq \inf \{u>0;A_i(0,t)\leq D_i(0,t+u)\},
\end{equation}
representing the frames required to serve the bits in the queue at time $t$. This delay is randomly varying due to the random service process  and the delay violation probability is defined as $p_i(w) = \mathbb{P}(W_i(t)>w)$, i.e., the probability that a bit is not received within a defined deadline $w$. In many cases, an exact expression for the delay violation probability is complicated to derive. However, queueing analysis can give statistical bounds on this function. In particular, the stochastic network calculus framework, introduced in Section~\ref{sec:queueing_analysis}, contains tools that are appropriate for deriving an upper bound on $p_i(w)$ given the underlying service process in \eqref{eq:service_model}. Such delay bounds are particularly suitable for performance evaluation in mission-critical networks since they provide upper limits on the delay violation probability, i.e., a real system operating under the assumed conditions will certainly achieve a better delay performance.

\vspace{-0.5ex}
\subsection{Problem Formulation} 
\label{sec:problem_formulation}

Based on the system preliminaries outlined above, we are interested in jointly studying the security and delay performance impacts of PLA in the baseline scenario when Eve is inactive, as well as under the considered adversarial strategies presented in Section~\ref{sub:adversary_assumptions_and_error_events}. To be able to do this, we must first characterize how the PLA error events and the different attack strategies affect the link layer performance of the system, which we capture through queuing analysis. That is, we seek the distributions of $N_k$ and $X_k^{(i)}$ given the behavior of Eve. This problem is addressed in Section~\ref{sec:modeling}. Next, we must analyze how the PLA impacts the delay performance in the resulting queueing system. We tackle this by deriving upper bounds on the delay violation probability $p_i(w)$, subject to a given PLA threshold $T$, corresponding $\pFA(T)$ and $\pMD(i,T)$, and the adversary strategy. Derivations of the bounds are provided in Section~\ref{sec:queueing_analysis}. Based on this analysis, we seek to answer if, and under which circumstances, PLA is a viable option for authentication in mission-critical communications. More specifically, we want to answer what the baseline delay impacts on introducing PLA are, how detection and delay performance scale with the number of receive antennas $\NRx$ and the strength of LOS component $\KRice$, and what impacts the considered adversarial strategies have on the system. These among other questions are finally studied through our numerical results in Section~\ref{sec:results}.

\vspace{-2ex}
\section{Attack Modeling and Queueing Impacts}
\label{sec:modeling}
In this section, we analyze how erroneous PLA decisions impact the system and queueing service models that we have introduced in Section~\ref{sec:preliminaries} under each of the adversarial strategies.

\subsection{Baseline Scenario} 
\label{sub:steady_state}
In the baseline scenario, the adversary is inactive and the queueing model is affected only by dropped messages due to false alarms. We assume that a set $\Dactive\subseteq \{1,\cdots,K_d\}$ of devices are active and that each has a constant arrival rate $\alpha_i$, which means that each of the active devices will request DTA resources in each frame. Considering one of the active devices $i$, it will request resources with a DTA request in the MGMT period. Since the adversary is inactive, the access point will receive only one request with the ID of device $i$ and the message will be authenticated based on the single message authentication $d_i(\tilde{\h}_m) \mathop{\gtrless}_{\mathcal{H}_1}^{\mathcal{H}_0} T$ (see Section~\ref{par:sma}). The observed channel state $\tilde{\h}_m$ will in this case be the authentic channel $\CN(\mean_i,\Cov)$ and the false alarm rate is given by $\pFA(T)$ in \eqref{eq:pFA}. Since we assume perfect channel-state information and a frame period shorter than the coherence time of the channel, the observed channel state will remain constant during the frame. Hence, if the DTA request is accepted, so will the following data payload message in the DTP\footnote{This is a consequence of our previous assumptions. However, if the coherence time is shorter, or estimation errors are present, modeling of this as a two independent authentication decisions would be straightforward.}. Since the requests independently get rejected by PLA with $\pFA(T)$, the number of scheduled devices follows a binomial distribution

\equationspace
\begin{equation}
	p_{|\IDTP|}(k) = \binom{|\Dactive|}{k}(1-\pFA(T))^k\pFA(T)^{|\Dactive|-k},
	\label{eq:binom_steady_state}
\end{equation}
and the distribution of $N_k$ follows as $p_{N_k}(n) = p_{|\IDTP|}(\frac{\Nframe}{n})$. The threshold $T$ is ideally set such that $\pFA(T)$ is low, giving a possible approximation $|\IDTP| \approx |\Dactive|$. For a particular device $i$, the distribution of $X_k^{(i)}$ is given by

\equationspace
\begin{equation}
	\Pr(X_k^{(i)}=0)=\pFA(T).
	\label{eq:steady_state_fa}
\end{equation}
That is, in case of a false-alarm in frame $k$, the data buffer observes zero service.

\vspace{-0.5ex}
\subsection{Detection of Data Injection Attacks}
\label{sec:data_injection}
In a data injection attack, Eve transmits a DTA request in the MGMT period with the aim of getting DTP resources for transmitting a false data message. Either Eve impersonates an inactive device $i$ that is not requesting resources in the current frame, in which case the DTA requests undergoes single-message authentication and is undetected with probability $\pMD(i,T)$, or Eve impersonates an active device, in which case the message is authenticated by $L=2$ message authentication. In the latter case, denoting by $m_i$ and $m_E$ the messages from device $i$ and Eve, respectively, a missed detection occurs in the union of events $\left\{\underset{m=m_i,m_E}{\argmin} d_i(\tilde{\h}_m) =m_E\right\}$ and $\{d_i(\tilde{\h}_{m_E})<T\}$. In this case, the probability of missed detection, denoted by $\pMD^{L=2}(i,T)$, can be written as

\equationspace
\begin{equation}
\begin{aligned}
\pMD^{L=2}(i,T) &= \prob(d_i(\h_E)<d_i(\h_i),d_i(\h_E)<T) \\ &= \prob(d_i(\h_E)<T)\prob(d_i(\h_i)>T) \\ &+ \prob(d_i(\h_E)<d_i(\h_i)|d_i(\h_i)<T).
\label{eq:p_e}
\end{aligned}
\end{equation}
We now use the notation $d_i = d_i(\h_i)$ and $d_E = d_i(\h_E)$ to discuss the probability \eqref{eq:p_e}. The second line of \eqref{eq:p_e} is simply $\pFA(T)\pMD(i,T)$. However, for the second term $\prob(d_E<d_i|d_i<T)$ an exact expression can only be obtained in integral form. Instead, by noting that $\prob(d_i<d_E,d_E<T) \leq \prob(d_E<T) = \pMD(i,T)$, we can provide upper and lower bounds

\equationspace
\begin{equation}
\pFA(T)\pMD(i,T) \leq \pMD^{L=2}(i,T) \leq \pMD(i,T).
\label{eq:upper_lower_p_e}
\end{equation}
Additionally, we can observe that $\prob(d_i<d_E,d_E<T) \leq \prob(d_E<d_i)$ and provide an upper bound $\prob(d_E<d_i) \leq p_d(i)$ in the following lemma:

\begin{lemma}
\label{thm:dis_attack}
The probability $\prob(d_E<d_i)$ can be upper bounded by

\equationspace
\begin{equation}
p_d(i) = \min_{-\lambda/2<t<1/2}(1+2(\lambda-1)t-4\lambda t^2)^{-\NRx}e^{-\frac{\nu_i \lambda t}{1+2\lambda t}},
\label{eq:p_d}
\end{equation}
where $\lambda_i = \frac{P_E(1+\KRice)}{P_i(1+\KRiceE)}$, and $\nu_i = 2(\mean_E-\mean_i)^\dag\Cov_E^{-1}(\mean_E-\mean_i)$.
\end{lemma}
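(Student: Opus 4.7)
The plan is a Chernoff bound applied to the random variable $d_i - d_E$, where I abbreviate $d_i \equiv d_i(\h_i)$ and $d_E \equiv d_i(\h_E)$. Since $\h_i$ and $\h_E$ correspond to two independent physical transmissions (the legitimate device $i$ and Eve, respectively), the quadratic forms $d_i$ and $d_E$ are independent. Therefore, for any $t > 0$ in the common domain of the two moment-generating functions,
\begin{equation*}
\prob(d_E < d_i) = \prob(d_i - d_E > 0) \leq \mathbb{E}\!\left[e^{t(d_i - d_E)}\right] = \mathbb{E}\!\left[e^{t d_i}\right]\mathbb{E}\!\left[e^{-t d_E}\right].
\end{equation*}

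Next, I would invoke the distributional facts already recorded in Section~\ref{subsec:far_mdr}, namely $d_i \sim \chi^2_{2\NRx}$ and $d_E \sim \lambda_i \chi^2_{2\NRx}(\nu_i)$. The central chi-squared MGF yields $\mathbb{E}[e^{t d_i}] = (1-2t)^{-\NRx}$ for $t < 1/2$. The non-central chi-squared MGF, together with the scaling by $\lambda_i$, yields
\begin{equation*}
\mathbb{E}[e^{-t d_E}] = (1 + 2\lambda_i t)^{-\NRx}\exp\!\left(-\frac{\nu_i \lambda_i t}{1+2\lambda_i t}\right)
\end{equation*}
for $t > -1/(2\lambda_i)$. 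Multiplying the two MGFs and using the elementary identity $(1-2t)(1+2\lambda_i t) = 1 + 2(\lambda_i-1)t - 4\lambda_i t^2$ puts the bound into exactly the form of the expression inside the minimization defining $p_d(i)$. Minimizing over the common open domain of validity then yields the claimed bound.

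There is no genuinely hard step here; the proof is a routine Chernoff-plus-MGF calculation. The points that require care are (i) verifying independence of $d_i$ and $d_E$, which follows from the standing modelling assumption that the channel realizations of different devices are independent; (ii) tracking the multiplicative factor $\lambda_i$ through the non-central chi-squared MGF so that both the polynomial factor $(1+2\lambda_i t)^{-\NRx}$ and the exponential correction $\exp(-\nu_i\lambda_i t/(1+2\lambda_i t))$ come out correctly; and (iii) confirming that the two MGF domains overlap in a nontrivial open interval containing strictly positive $t$, so that the minimization produces a useful (non-trivial) upper bound rather than the vacuous value at $t=0$.
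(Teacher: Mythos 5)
Your proposal is correct and follows essentially the same route as the paper's proof: a Chernoff/Markov bound on $d_i-d_E$, factorization via independence of $\h_i$ and $\h_E$, and substitution of the central and scaled non-central $\chi^2$ MGFs before minimizing over $t$. Your added care about the overlap of the MGF domains (and that only $t>0$ gives a valid Chernoff bound) is a sound refinement of the same argument, not a different approach.
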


\begin{proof}
We rewrite $\prob(d_i(\h_E)<d_i(\h_i)) = \prob(d_i(\h_i)-d_i(\h_E)>0)$ and use the Chernoff bound to get that for every $t>0$

\equationspace
\begin{align*}
	\prob(d_i(\h_i)-d_i(\h_E)>0) & = \prob(e^{t(d_i(\h_i)-d_i(\h_E))}>1) \\
	\leq \E\left[e^{t(d_i(\h_i)-d_i(\h_E))}\right]
	& = \E\left[e^{t d_i(\h_i)}\right]\E\left[e^{-t d_i(\h_E)}\right],
	\numberthis
	\label{eq:chernoff}
\end{align*}
where we have applied the Markov inequality and used the independence of $\h_i$ and $\h_E$. Now since $d_i(\h_i)\sim \chi^2_{2\NRx}$ and $d_i(\h_E)\sim \lambda\chi^2_{2\NRx}(\nu_i)$ (see Section~\ref{subsec:far_mdr}), we get $\E[e^{td_i(\h_i)}] = (1-2t)^{-\NRx}$ and $\E[e^{-td_i(\h_E)}] = (1+2\lambda t)^{-\NRx}\exp\left(\frac{-\nu_i \lambda t}{1+2 \lambda t}\right)$ for $-\lambda/2<t<1/2$ from the standard moment generating functions for the corresponding distributions. Plugging these expressions into \eqref{eq:chernoff} and minimizing over $t$ yields \eqref{eq:p_d} which completes the proof.

\end{proof}

The upper bound that is tightest out of \eqref{eq:upper_lower_p_e} and \eqref{eq:p_d} depends on the authentication threshold $T$ (clearly $\pMD(i,T)\rightarrow 1$ as $T\rightarrow \infty$ and $\pMD(i,T)\rightarrow 0$ as $T\rightarrow 0$ while $p_d(i)$ is independent of $T$). Hence, we tighten our bound on the missed detection probability when Eve is launching a data injection attack against active device $i$ by

\equationspace
\begin{equation}
\pMD^{L=2}(i,T) \leq p_{\text{MD},\text{Upper}}(i) = \min\{\pMD(i,T),p_d(i)\}.
\label{eq:p_e_upper}
\end{equation}
This bound will additionally later prove useful when analyzing the disassociation attack in Section~\ref{sec:disassociation_attack}.

\begin{remark} In a data injection attack, the delay performance of legitimate devices will be affected since accepted DTA requests from Eve will reduce the amount of resources scheduled to other devices. However, this impact is principally the same as under the Sybil attack discussed in Section~\ref{sub:sybil_attack}. Therefore, we only use the data injection scenario to study the detection performance of PLA, leaving questions regarding queueing performance to be answered by the study of Sybil attacks.
\end{remark}

\vspace{-0.5ex}
\subsection{Queueing Impacts of Sybil Attacks} 
\label{sub:sybil_attack}

Recall that in a Sybil attack, Eve targets a set of inactive devices $\Dsybil$ and sends DTA requests with the corresponding IDs. Consequently, the access point receives messages from $\Dactive \cup \Dsybil$ in the MGMT period and needs to differentiate which ones are legitimate. If Eve successfully gets many DTA requests through, $N_k$ given by \eqref{eq:N_k} decreases and legitimate devices get less resources which can result in growing queue backlogs. Under a Sybil attack, we assume that active legitimate devices $\Dactive$ experience service dropouts modeled by $X_k^{(i)}$ the same way as in the baseline case \eqref{eq:steady_state_fa}; however, the distribution of $N_k$ is different due to Sybil IDs launched by Eve.

Assuming all IDs in $\Dactive \cup \Dsybil$ are distinct (Eve is assumed to target only inactive devices in the Sybil attack), the number of devices that get resources in the data transmission period is

\equationspace
\begin{equation}
	|\IDTP| = \sum_{i\in\Dactive \cup \Dsybil} \mathbb{I}(d_i(\tilde{\h}_{m_i})<T).
\end{equation}
We decompose $|\IDTP| = \Kactive + \Ksybil$ where $\Kactive$ is characterized by the baseline distribution of \eqref{eq:binom_steady_state} (i.e., requests rejected by false alarms) and

\equationspace
\begin{equation}
	\Ksybil = \sum_{i\in\Dsybil} \mathbb{I}(d_i(\h_E)<T)
\end{equation}
represents the number of Sybil IDs successfully launched by Eve. For a moderate number of Sybil IDs ($< 30$), the distribution of $\Ksybil$ can be combinatorially approximated as
\begin{equation}
\begin{aligned}
p_{\Ksybil}(k) \approx \sum_{B \in A_k}&\prod_{i \in B}\pMD(i,T)\times \\ & \prod_{j \in B^c}(1-\pMD(j,T)),
\label{eq:k_sybil_approx_pmf}
\end{aligned}
\end{equation}
where $A_k$ denotes the set of all size $k$ subsets of $\Dsybil$. This approximation stems from an assumption that the events $\{d_i(\h_E)<T\}_{i\in\Dsybil}$ can be approximated as independent, in which case $\Ksybil$ is Poisson-binomial distributed. Now the distribution of $|\IDTP|$ under a Sybil attack can be written as the convolution

\equationspace
\begin{equation}
p_{\IDTP,\text{Sybil}}(k) = \sum_{l=0}^{k} p_{\Kactive}(l)p_{\Ksybil}(l-k),
\end{equation}
from which the distribution of $N_k$ follows as $p_{N_k}(n) = p_{\IDTP,\text{Sybil}}(\frac{\Nframe}{n})$. 

The impact of the Sybil attack depends on the system's available resources $\Nframe$: If $\Nframe$ by design allows all devices to communicate simultaneously, the Sybil IDs will not have a substantial impact on the service of the legitimate devices. However, if the system is optimized to only have a subset of devices communicating at a time (e.g., in order to reduce latency or if only a subset of devices is involved in a particular sensing tasks), the result of launching multiple additional Sybil IDs might have severe impacts on the active legitimate devices. An alternative counter-strategy is to only accept requests from devices that are expected to transmit (e.g., sensors carrying relevant measurements for the running application). However, such application-layer information might not be available at the physical and MAC layers.

\vspace{-0.5ex}
\subsection{Queueing Impacts of Disassociation Attacks} 
\label{sec:disassociation_attack}

In a disassociation attack, Eve targets an active legitimate device and sends DCN request with the corresponding ID. In an attacked frame, the access point will observe two messages $m_1$ and $m_2$ with the same ID (i.e., $\ID(m_1) = \ID(m_2)$) and uses \eqref{eq:L_message_hyp_test_2} to decide which one is authentic. If the access point accepts the DCN request from Eve, the legitimate device will need to reconnect in order to continue its data transfer which results in a disruption of the communication (i.e., $s_k = 0$) for $\KRC$ consecutive frames which can lead to growing backlogs and increased delay. In principle, Eve could launch disassociation attacks against multiple links within the network. However, here we model the queueing impact when Eve targets a single device $i$.

In the disassociation attack, the frame-level service process $s_k$ in \eqref{eq:service_model} follows the same model as in the baseline scenario \eqref{eq:steady_state_fa} (i.e., frames are dropped with the false alarm rate and $N_k$ is given by its baseline distribution). We consider independent Bernoulli attack attempts from Eve with probability $\pAttack$ and to model the impact on the queueing performance, we divide the data flow from device $i$ to the access point into blocks consisting of $\KRC$ frames each and define the aggregated arrival process as $a'_l = \sum_{k=\KCN l}^{\KRC (l+1)-1} a_k$ and service process as

\equationspace
\begin{equation}
s'_l = \begin{cases}
\sum_{k=\KRC l}^{\KRC (l+1)-1} s_k, & \text{if } D_l = 0 \\
0  & \text{if } D_l = 1,
\end{cases}
\label{eq:dis_service_process}
\end{equation}
where $D_l$ is a Bernoulli random variable indicating a successful disassociation attack in the block. The distribution of $D_l$ is then given by

\equationspace
\begin{equation}
	\prob(D_l=1) = 1 - (1-\pMD^{L=2}(i,T) \pAttack)^{\KRC},
	\label{eq:p_d_l}
\end{equation}
where $\pMD^{L=2}(i,T)$ is the probability of accepting Eve's DCN message (i.e., the same situation as in the data injection attack and hence $\pMD^{L=2}(i,T)$ is given by \eqref{eq:p_e}). We recall from Section~\ref{sec:data_injection} that a closed form solution for $\pMD^{L=2}(i,T)$ is not available. However, since $\Pr(D_l=1)$ is monotonically increasing with $\pMD^{L=2}(i,T)\in[0,1)$, an upper bound on $\pMD^{L=2}(i,T)$ suffices to upper bound $\prob(D_l=1)$. An upper bound on $\pMD^{L=2}(i,T)$ is given by \eqref{eq:p_e_upper} and hence we get an upper bound $\prob(D_l=1) \leq 1 - (1-p_{\text{MD},\text{Upper}}(i)\pAttack)^{\KRC}$.

Our analysis of the disassociation attack serves as a worst-case model due to the upper bound on $\prob(D_l=1)$. However, since in the next Section~\ref{sec:queueing_analysis} we aim to upper bound the delay violation probability, an upper bound on the disassociation probability suffice for this purpose. Additionally, we acknowledge that other methods could be used for reducing the impact of disassociation attacks. For example, one could always choose DTA over DCN requests, which would render the disassociation attack harmless as long as an active device is targeted. However, we see this as an issue of protocol design and include disassociation attacks in our studies in the following.

\vspace{-1ex}
 \section{Delay Performance Analysis}
 \label{sec:queueing_analysis}
 
In this section, we derive delay performance bounds for the considered system using tools from stochastic network calculus. We begin by introducing necessary results and notation from the stochastic network calculus framework:

\vspace{-0.5ex}
\subsection{Stochastic Network Calculus}
\label{sec:snc}
Stochastic network calculus is a mathematical framework that allows us to analyze input-output relationships of stochastic queueing systems through, for example, performance bounds on delay or backlog given arrival and service distributions. For a complete overview of stochastic network calculus, we refer to \cite{Fidler2015}. The work in~\cite{Zubaidy2016} developed the stochastic network calculus framework for wireless fading links by observing that the analysis is simplified by converting the bivariate stochastic processes $A(\tau,t)$, $S(\tau,t)$ and $D(\tau,t)$ into $\mathcal{A}(\tau,t)\triangleq e^{A(\tau,t)}$, $\mathcal{S}(\tau,t)\triangleq e^{S(\tau,t)}$ and $\mathcal{D}(\tau,t)\triangleq e^{D(\tau,t)}$. This transformation allows the characterization of the random service process in terms of the varying instantaneous SNR due to fading. This is referred to as transforming the bit-domain processes into the SNR-domain since the processes become linear in the instantaneous SNR $\gamma_k$ instead of logarithmic. Arrival processes in the SNR-domain can then be seen as instantaneous SNR demands. In bit-domain, stochastic network calculus is based on a $(\min,+)$ dioid algebra over $\mathbb{R}^+$. Stochastic network calculus in the SNR-domain, on the other hand, is instead based on the $(\min,\times)$ dioid algebra since processes in the SNR-domain become multiplicative instead of additive. The performance bounds, which can be seen as variations of moment bounds, are derived in terms of Mellin transforms of the involved queueing processes. The Mellin transform of a random variable $X$, closely related to the moment-generating function (MGF), is defined as $\mathcal{M}_{X}(s) = \mathbb{E}[X^{s-1}]$.

The upper bound on the delay violation probability we utilize in this paper is given by Lemma~\ref{lem:kernel}:
\begin{lemma}
\label{lem:kernel}
For $s>0$,

\equationspace
\begin{equation}
	p(w) \leq \Ker(s,t+w,t),
	\label{eq:snc_delay_bound}
\end{equation}
where $\Ker(s,\tau,t)$ is called the kernel and given by

\equationspace
\begin{equation}
	\Ker(s,\tau,t) \triangleq \sum_{u=0}^{\min(\tau,t)}\MelA(1+s,u,t)\MelS(1-s,u,\tau),
	\label{eq:kernel}
\end{equation}
and $\MelS(s,\tau,t) = \E[\mathcal{S}(\tau,t)^{s-1}]$ and $\MelA(s,\tau,t) = \E[\mathcal{A}(\tau,t)^{s-1}]$ are Mellin transforms of the SNR-domain service and arrival processes. 
\end{lemma}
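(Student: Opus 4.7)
The plan is to follow the standard stochastic network calculus route: express the delay violation event in terms of arrivals and service, translate to the SNR domain, and then apply a Chernoff/Markov bound to obtain a Mellin transform expression, with a union bound over the relevant time indices.

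First I would unfold the definition of $W(t)$ to observe that $\{W(t)>w\}$ implies $\{A(0,t)>D(0,t+w)\}$, i.e.\ the arrival curve has overtaken the departure curve by the deadline. Since the queueing system is lossless and work-conserving, the departure process admits the standard network calculus lower bound $D(0,t+w)\ge \inf_{0\le u\le t+w}\{A(0,u)+S(u,t+w)\}$. Combining these two inequalities, decomposing $A(0,t)=A(0,u)+A(u,t)$ for $u\le t$, and noting that for $u>t$ the arrival increment $A(u,t)$ is zero, the event $\{W(t)>w\}$ is contained in $\bigl\{\sup_{0\le u\le \min(t,t+w)}[A(u,t)-S(u,t+w)]>0\bigr\}$.

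Second, I would pass to the SNR domain using $\mathcal{A}=e^{A}$ and $\mathcal{S}=e^{S}$. Because the exponential is strictly increasing, the supremum event above is equivalent to $\bigl\{\max_{0\le u\le \min(t,\tau)}\mathcal{A}(u,t)/\mathcal{S}(u,\tau)>1\bigr\}$ with $\tau=t+w$. A union bound over the (discrete) time index $u$ then gives
\begin{equation*}
p(w)\le \sum_{u=0}^{\min(\tau,t)}\mathbb{P}\bigl(\mathcal{A}(u,t)/\mathcal{S}(u,\tau)>1\bigr).
\end{equation*}

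Third, for each term I would apply the Markov/Chernoff-type inequality in its multiplicative form: for any $s>0$, $\mathbb{P}(X>1)\le \mathbb{E}[X^{s}]$. Using independence between the arrival and service processes, this factorises each summand as $\mathbb{E}[\mathcal{A}(u,t)^{s}]\cdot \mathbb{E}[\mathcal{S}(u,\tau)^{-s}]$, which by definition of the Mellin transform equals $\mathcal{M}_{\mathcal{A}}(1+s,u,t)\,\mathcal{M}_{\mathcal{S}}(1-s,u,\tau)$. Summing over $u$ yields precisely the kernel $\Ker(s,\tau,t)$ in \eqref{eq:kernel}, establishing \eqref{eq:snc_delay_bound}.

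The only delicate step is the network calculus inequality $D(0,t+w)\ge \inf_u\{A(0,u)+S(u,t+w)\}$: it requires the system to be causal, lossless, and work-conserving with a dynamic $S$-server, which is implicit in the service model of Section~\ref{sec:delay_performance_metric}. Once that foundation is in place, the remaining manipulations—taking exponentials, applying Markov, using the union bound, and invoking independence of $\mathcal{A}$ and $\mathcal{S}$—are routine. The main obstacle in an expository proof is therefore not any single hard estimate but making the dioid switch from $(\min,+)$ to $(\min,\times)$ transparent, and justifying that the sum in \eqref{eq:kernel} can be truncated at $\min(\tau,t)$ because $A(u,t)=0$ (equivalently $\mathcal{A}(u,t)=1$) for $u>t$.
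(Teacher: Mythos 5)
Your proposal is correct and follows essentially the same route as the result the paper relies on: the paper gives no in-text argument but simply cites Theorem~1 of \cite{Zubaidy2016}, whose proof is exactly this chain of steps (delay event implied by $A(0,t)>D(0,t+w)$, dynamic-server lower bound on departures, restriction of the index to $u\le\min(\tau,t)$, union bound, Chernoff/Markov in multiplicative form, and independence giving the product of Mellin transforms). Nothing is missing; your reconstruction is a faithful expansion of the cited proof.
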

\begin{proof}
See Theorem 1 in \cite{Zubaidy2016}.
\end{proof}

With i.i.d. instantaneous arrivals and service, we can write $\MelS(s,\tau,t)=\MelS(s)^{t-\tau}$ and $\MelA(s,\tau,t)=\MelA(s)^{t-\tau}$, where $\MelS(s) \triangleq \E[e^{s_k(s-1)}]$ and $\MelA(s) \triangleq \E[e^{a_k(s-1)}]$ due to the independence of the instantaneous service and arrivals $s_k$ and $a_k$. Then, assuming stationarity of the underlying queueing processes, we let $t\rightarrow \infty$ in the righthand side of \eqref{eq:snc_delay_bound} and get

\equationspace
\begin{equation}
	\lim_{t\rightarrow \infty} \Ker(s,t+w,t) = \frac{\MelS(1-s)^w}{1-\MelA(1+s)\MelS(1-s)},
	\label{eq:steady_state_kernel}
\end{equation}
under the stability condition $\MelA(1+s)\MelS(1-s)<1$ required for the sum in \eqref{eq:kernel} to converge. Since Lemma~\ref{lem:kernel} holds for all $s>0$, it follows that minimization of \eqref{eq:steady_state_kernel} over $s>0$ gives us an asymptotic upper bound on the delay violation probability. Hence, for the stable and stationary queueing system, the upper bound on the delay violation probability can be compactly written as $p(w) \leq \inf_{s>0}\left\{\lim_{t\rightarrow \infty} \Ker(s,t+w,t) \right\}$
%
%
with the objective function to be minimized given by the steady-state kernel in \eqref{eq:steady_state_kernel}. This function can be shown to be a convex function for every $s$ in the stability interval $\MelA(1+s)\MelS(1-s)<1$ (see Theorem~1 in~\cite{Zubaidy2016_convex}). However, no analytical tools from convex optimization can be applied, and therefore, one typically resorts to a numerical grid search for the minimization over $s$.

Since in this paper we assume constant arrivals of $\alpha$ bits per frame, the arrival process is deterministic and the Mellin transform of the SNR-domain arrival process can easily be found to be $\MelA(s) = e^{\alpha(s-1)}$. The service process, following the SIMO channel service model \eqref{eq:service_model}, has a more complicated Mellin transform which we derive in the following subsections for the considered attack scenarios.

It is worth noting that alternative stochastic network calculus approaches exist that may be used for this analysis including \textit{effective capacity}~\cite{Wu2003} and \textit{MGF-based analysis}~\cite{Fidler2006}. Nevertheless, the usefulness of the approach in~\cite{Zubaidy2016_convex} that we employ is most apparent when applied to wireless fading channels as the Mellin transform $\mathcal M_{\mathcal S}$ is already derived for many fading channels in the literature, e.g.,~\cite{Schiessl2015,Zubaidy2017,Naghibi2017}. This makes the approach particularly attractive for wireless networks analysis.

\subsection{Baseline Analysis}
Recall that in the baseline scenario no active attacker is present and frames are dropped with the false alarm rate, i.e, $\Pr(X=0) = 1-p_X = \pFA$. The service model is given by \eqref{eq:service_model} with $R_k = \log_2(1+\h_k^\dag\h_k)$ where we now, for ease of notation, have dropped the user index $i$. Note that in this section we assume the allocated resources $N_k$ to be deterministic, something we will later generalize when deriving the Sybil attack bound. To simplify the derivation, we define the functions $h(\gamma_k,X_k) \triangleq e^{s_k}$ and $g(\gamma_k)=1+\gamma_k$ in terms of the instantaneous SNR $\gamma_k$ so that

\equationspace
\begin{align*}
h(\gamma_k,X_k) = \begin{cases}
g(\gamma_k)^{\frac{N_k}{\ln(2)}}, & \quad\text{if}\quad X_k=1 \\
1, & \quad\text{if}\quad X_k=0.
\end{cases}
\label{eq:h_function}
\numberthis
\end{align*}
In the following, we provide our main analytical result, which is an approximate expression for the Mellin transform of $g(\gamma_k)$ in Theorem~\ref{thm:mellin_simo}. From this result, the Mellin transform of the service process in steady-state easily follows, as stated in Corollary~\ref{cor:mellin_steady}.

\begin{thm}
\label{thm:mellin_simo}
For the Rice fading SIMO channel with mean $\mean$ and covariance matrix $\Cov$, the Mellin transform of $g(\gamma_k)$ can be approximated by

\equationspace
\begin{equation}
\begin{aligned}
	\mathcal{M}_{g(\Gamma_k)}(s) & \approx \frac{e^{1/2\alpha_g}(2\alpha_g)^{s-1}}{\Gamma(k_g/2)} \times \\ &\hspace{-1cm} \sum_{m=0}^{\infty}\binom{\frac{k_g-2}{2}}{m}\frac{1}{(-2\alpha_g)^m}\Gamma\left[s-m+\frac{k_g-2}{2},\frac{1}{2\alpha_g}\right]
	\label{eq:simo_thm}
\end{aligned}
\end{equation}
where $\Gamma(s,x)=\int_x^{\infty}t^{s-1} e^tdt$ denotes the upper incomplete gamma function and $\alpha_g$ and $k_g$ are parameters of the approximate distribution of $\gamma_k$ given by

\equationspace
\begin{equation}
\alpha_g = \frac{\frac{1}{2}(\Tr(\Cov^2)+\mean^\dag\Cov\mean)}{1+\Tr(\Cov)+\mean^\dag\mean} \quad \text{and} \quad k_g = \frac{(1+ \Tr(\Cov)+\mean^\dag\mean)^2}{\frac{1}{2}(\Tr(\Cov^2)+\mean^\dag\Cov\mean)}.
\end{equation}
\end{thm}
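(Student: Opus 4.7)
The plan is to prove the theorem in two stages: first replace the intractable exact distribution of $\gamma_k=\|\h_k\|^2$ by a tractable two-parameter approximation, and then compute the Mellin transform of $1+\gamma_k$ in closed form under that approximation. The approximating family will be a gamma distribution, equivalently a scaled central chi-squared $\alpha_g\chi^2_{k_g}$, chosen by matching two moments. Using the standard identities for complex Hermitian quadratic forms, $\E[\gamma_k]=\Tr(\Cov)+\mean^\dag\mean$ and a corresponding second-moment expression involving $\Tr(\Cov^2)$ and $\mean^\dag\Cov\mean$, the moment-matching equations $\alpha_g k_g = \E[\,\cdot\,]$ and $2\alpha_g^2 k_g = \mathrm{Var}[\,\cdot\,]$ are linear in the logarithm of $\alpha_g$ and $k_g$ and can be solved directly to obtain the stated parameters (the precise form of the denominator $1+\Tr(\Cov)+\mean^\dag\mean$ indicates that the matching is done on $g(\gamma_k)=1+\gamma_k$ rather than on $\gamma_k$ itself, which is the natural choice since $g$ is what enters the service process).

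Next I would set up the Mellin integral under the approximation $\gamma_k\sim\mathrm{Gamma}(k_g/2,\,2\alpha_g)$:
\begin{equation*}
\mathcal{M}_{g(\Gamma_k)}(s)\approx\frac{1}{(2\alpha_g)^{k_g/2}\Gamma(k_g/2)}\int_0^\infty (1+x)^{s-1}\,x^{k_g/2-1}e^{-x/(2\alpha_g)}\,dx.
\end{equation*}
The substitution $u=1+x$ shifts the lower limit to $u=1$ and pulls out the prefactor $e^{1/(2\alpha_g)}$, leaving an integrand proportional to $u^{s-1}(u-1)^{k_g/2-1}e^{-u/(2\alpha_g)}$. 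The technical difficulty here is that $k_g/2-1$ is generally non-integer, so the finite binomial theorem is unavailable and the integral has no elementary antiderivative. This is the main obstacle.

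To overcome it, I would exploit the fact that the integration domain is exactly $u\ge 1$, so $|1/u|\le 1$ and the generalized (infinite) binomial series
\begin{equation*}
(u-1)^{k_g/2-1}=u^{k_g/2-1}\sum_{m=0}^{\infty}\binom{(k_g-2)/2}{m}(-1)^m u^{-m}
\end{equation*}
is absolutely convergent on the whole interval. Exchanging sum and integral (which is the approximation step implicit in the $\approx$) reduces each term to $\int_1^\infty u^{s+(k_g-2)/2-m-1}e^{-u/(2\alpha_g)}du$, and a further rescaling identifies this integral with $(2\alpha_g)^{s+(k_g-2)/2-m}\,\Gamma\!\bigl[s-m+(k_g-2)/2,\,1/(2\alpha_g)\bigr]$ via the definition of the upper incomplete gamma function given in the statement.

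Finally, I would collect the prefactors. Combining $(2\alpha_g)^{-k_g/2}$ from the gamma density with the per-term factor $(2\alpha_g)^{s+(k_g-2)/2-m}$ gives $(2\alpha_g)^{s-1}(2\alpha_g)^{-m}$; absorbing the $(-1)^m$ from the generalized binomial coefficient into the denominator produces the $1/(-2\alpha_g)^m$ appearing in \eqref{eq:simo_thm}. Pulling the surviving $e^{1/(2\alpha_g)}(2\alpha_g)^{s-1}/\Gamma(k_g/2)$ outside the sum yields exactly the right-hand side of the theorem, completing the proof. The only non-routine points to flag in the write-up are the justification for truncating or interchanging the series (the paper's use of $\approx$ absorbs this) and the agreement between the moment-matched parameters and the explicit formulas for $\alpha_g$ and $k_g$.
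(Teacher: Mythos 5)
Your proposal is correct and follows essentially the same route as the paper's own proof: a two-moment (Pearson-type) approximation of $\gamma_k$ by a scaled central $\chi^2$, the substitution $u=1+\alpha_g x$, the generalized binomial series expansion on $u\ge 1$, and term-by-term integration into upper incomplete gamma functions. Your reading of the parameter formulas as moment-matching on $1+\gamma_k$ is a reasonable reconciliation of the $1+\Tr(\Cov)+\mean^\dag\mean$ terms in the statement; the paper's proof text matches the moments of $\gamma_k$ itself, so this is a cosmetic discrepancy rather than a gap.
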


\begin{proof}

We begin by using the fact that $\gamma_k$ is a sum of independent non-central $\chi^2$ distributed random variables with $\mathbb{E}[\gamma_k] = \Tr(\Cov)+\mean^\dag\mean$ and $\text{Var}[\gamma_k] = \Tr(\Cov^2)+\mean^\dag\Cov\mean$. Now we use that the sum of non-central $\chi^2$ random variables can be approximated as a scaled central $\chi^2$~\cite{Pearson1959}. That is, we write $\gamma_k\approx\alpha_g X$, where $X\sim\chi^2_{k_g}$. Transferred to the Mellin transform, the approximation becomes $\mathcal{M}_{g(\h_k)}(s) \approx \mathcal{M}_{1+\alpha_g X}(s)$. Now, we seek

\equationspace
\begin{align*}
&\mathcal{M}_{1+\alpha X}(s)  = \int_{0}^{\infty}(1+\alpha x)^{s-1}\frac{1}{2^{\frac{k_g}{2}}\Gamma(k_g/2)}x^{\frac{k_g}{2}-1}e^{-\frac{x}{2}}dx\\
& = \frac{e^{\frac{1}{2\alpha}}}{(2\alpha)^{\frac{k_g}{2}}\Gamma(k_g/2)}\underbrace{\int_{1}^{\infty}u^{s-1}(u-1)^{\frac{k_g}{2}-1}e^{-\frac{u}{2\alpha}}du}_{\triangleq I},
\numberthis
\label{eq:simo_thm_1}
\end{align*}
where in the second line we have used the change of variable $u = 1+\alpha x$ and defined the integral $I$ which remains to be solved. To solve it, we can use the binomial expansion $(u-1)^{\frac{k_g}{2}-1} = \sum_{m=0}^{\infty}\binom{\frac{k_g-2}{2}}{m}(-1)^m u^{\frac{k_g}{2}-1-m}$, which plugged into the integral $I$ yields

\equationspace
\begin{align*}
&I
 = \int_{1}^{\infty}u^{s-1}\sum_{m=0}^{\infty}\binom{\frac{k_g-2}{2}}{m}(-1)^m u^{\frac{k_g}{2}-1-m}e^{-\frac{u}{2\alpha_g}}du \\
& \hspace{-0.6ex}= \sum_{m=0}^{\infty}\binom{\frac{k_g-2}{2}}{m}\frac{(2\alpha_g)^{s-m + \frac{k_g-2}{2}}}{(-1)^m}\int_{1/2\alpha_g}^{\infty}t^{s-m+\frac{k_g-2}{2}-1}e^{-t}dt \\
& \hspace{-0.6ex}= (2\alpha_g)^{k'+s}\sum_{m=0}^{\infty}\binom{k'}{m}\frac{1}{(-2\alpha_g)^m}\Gamma\left[s-m+k',\frac{1}{2\alpha_g}\right],\numberthis
\label{eq:simo_thm_2}
\end{align*}
where we in the second to third line have used the change of variable $t= u/2\alpha_g$ and introduced $k' = \frac{k_g-2}{2}$. Plugging \eqref{eq:simo_thm_2} into \eqref{eq:simo_thm_1} yields \eqref{eq:simo_thm}. Finally, since $\mathbb{E}[\alpha_g X] = \alpha_gk_g$ and $\text{Var}[\alpha_g X] = \alpha_g^2 2k_g$, we need $\alpha_g = \frac{\text{Var}[\gamma_k]}{2\mathbb{E}[\gamma_k]}$ and $k_g = \frac{2\mathbb{E}[\gamma_k]^2}{\text{Var}[\gamma_k]}$ in order to match the two first moments of the approximation, which completes the proof.

\end{proof}

With the result of Theorem~\ref{thm:mellin_simo} in place, we get the service-process Mellin transform through Corollary~\ref{cor:mellin_steady}:

\equationspace
\begin{cor}
	\label{cor:mellin_steady}
	For the baseline scenario, with Bernoulli frame drops with probability $\pFA$ due to PLA, the Mellin transform of the service process is given by
	\begin{equation}
	\begin{aligned}
		&{\MelS}_{,\text{Baseline}}(s) =\\& (1-\pFA)\mathcal{M}_{g(\gamma_k)}\left[1+\frac{N_k(s-1)}{\ln 2}\right]+ \pFA.
		\label{eq:mellin_steady}
	\end{aligned}
	\end{equation}
\end{cor}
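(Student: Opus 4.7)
The plan is to compute $\MelS(s) = \mathbb{E}[e^{s_k(s-1)}]$ directly via conditioning on the Bernoulli service indicator $X_k$ introduced in \eqref{eq:service_model}, and then to invoke Theorem~\ref{thm:mellin_simo} on the served branch. First, I would split the expectation by the law of total expectation as
\[
\MelS(s) = \Pr(X_k = 0)\,\mathbb{E}\!\left[e^{s_k(s-1)}\,\big|\,X_k = 0\right] + \Pr(X_k = 1)\,\mathbb{E}\!\left[e^{s_k(s-1)}\,\big|\,X_k = 1\right].
\]
On the drop branch, \eqref{eq:service_model} gives $s_k = 0$, so $e^{s_k(s-1)} = 1$, and by \eqref{eq:steady_state_fa} this branch contributes exactly $\pFA$, producing the additive constant in \eqref{eq:mellin_steady}.

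For the served branch, I would substitute $s_k = N_k \log_2(1+\gamma_k) = \frac{N_k}{\ln 2}\ln g(\gamma_k)$, using $g(\gamma_k) = 1+\gamma_k$ from \eqref{eq:h_function}. Exponentiating yields $e^{s_k(s-1)} = g(\gamma_k)^{N_k(s-1)/\ln 2}$, and comparing with the definition $\mathcal{M}_X(s') = \mathbb{E}[X^{s'-1}]$ identifies this conditional expectation as $\mathcal{M}_{g(\gamma_k)}\!\left[1 + \frac{N_k(s-1)}{\ln 2}\right]$, which is exactly the object approximated by Theorem~\ref{thm:mellin_simo}. Multiplying by $\Pr(X_k = 1) = 1 - \pFA$ gives the first term of \eqref{eq:mellin_steady}, and summing the two branches completes the derivation.

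There is no real technical obstacle here; the corollary is essentially a one-line conditioning argument packaged around Theorem~\ref{thm:mellin_simo}. The only subtlety I would flag explicitly is the implicit independence of the PLA indicator $X_k$ from the instantaneous SNR $\gamma_k$: although both depend on the same frame-level channel realization, the service model \eqref{eq:service_model} and the baseline description in Section~\ref{sub:steady_state} abstract the PLA outcome into an exogenous Bernoulli variable with rate $\pFA$, which is precisely what allows the conditional expectation on the served branch to collapse to the unconditional Mellin transform $\mathcal{M}_{g(\gamma_k)}$ without any correlation correction.
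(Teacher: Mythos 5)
Your proposal is correct and follows essentially the same route as the paper: the paper's proof likewise takes the expectation of $h(\gamma_k,X_k)^{s-1}$, invokes the (assumed) independence of the Bernoulli indicator $X_k$ from $\gamma_k$ with $\mathbb{P}(X_k=1)=1-\pFA$, and reduces the served branch to $\mathcal{M}_{g(\gamma_k)}\left[1+\frac{N_k(s-1)}{\ln 2}\right]$. Your closing remark that the independence of $X_k$ and $\gamma_k$ is a modeling abstraction (since both stem from the same frame-level channel realization) is a fair and careful observation, and it coincides with the assumption the paper states explicitly in its proof.
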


\begin{proof}

It follows by taking the expectation of $h(\gamma_k,X_k)^{s-1}$, observing that $X_k$ independent of $\gamma_k$ and Bernoulli distributed with $p_X=\mathbb{P}(X_k=1)$, and that $p_X = 1-\pFA$ in the baseline scenario. For mathematical details, we refer to \cite{Forssell2017,Schiessl2015}.

\end{proof}

\subsection{Analysis for Sybil Attacks}

In a Sybil attack, the number of resources each device gets assigned $N_k \sim p_N(n)$ is varying depending on the success of the adversary. We provide the Mellin transform of the service process in this generalized case in the following corollary (following from Theorem~\ref{thm:mellin_simo}):

\begin{cor}
	Under Sybil attack, with scheduled resources distributed according to $p_N(n)$ and frame-drops with the false alarm rate $\pFA$, the service-process Mellin-transform is given by
	
	\equationspace
	\begin{align*}
		&{\MelS}_{,\text{Sybil}}(s) = \\ &(1-\pFA) \sum_{n}\left[\mathcal{M}_{g(\gamma_k)}\left[1+\frac{n(s-1)}{\ln 2}\right]p_N(n)\right]+ \pFA.
		\label{eq:mellinS_sybil}
		\numberthis
	\end{align*}
\end{cor}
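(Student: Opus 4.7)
The plan is to mirror the derivation of Corollary~\ref{cor:mellin_steady} (the baseline result), generalizing the argument to allow $N_k$ to be a random variable distributed according to $p_N(n)$ instead of a deterministic constant. I will start from the definition $\MelS(s)=\E[e^{s_k(s-1)}]=\E[h(\gamma_k,X_k)^{s-1}]$, where $h$ is the function defined in \eqref{eq:h_function} with the modification that the exponent $N_k/\ln 2$ is now random.

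First, I would condition on the Bernoulli service-dropout indicator $X_k$. By \eqref{eq:h_function}, $h(\gamma_k,X_k)^{s-1}=1$ whenever $X_k=0$, contributing the additive term $\pFA$ (since $\Pr(X_k=0)=\pFA$ by the same reasoning as in the baseline, because an active legitimate device's DTA request is still independently rejected with the false alarm rate under the Sybil scenario). When $X_k=1$, I would have $h(\gamma_k,X_k)^{s-1}=g(\gamma_k)^{N_k(s-1)/\ln 2}$.

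Next, I would condition on $N_k=n$ and invoke the independence of $N_k$ from the legitimate channel realization $\gamma_k$ and from $X_k$. This independence is justified because, under the Sybil attack model of Section~\ref{sub:sybil_attack}, $N_k$ depends only on the number of Sybil IDs accepted and on false alarms affecting other devices, not on the channel $\h_{k,i}$ nor on the authentication outcome of device $i$ itself. Hence
\begin{equation*}
\E\!\left[g(\gamma_k)^{N_k(s-1)/\ln 2}\,\big|\,X_k=1\right]=\sum_n \E\!\left[g(\gamma_k)^{n(s-1)/\ln 2}\right]p_N(n),
\end{equation*}
and each inner expectation equals $\mathcal{M}_{g(\gamma_k)}\!\left[1+n(s-1)/\ln 2\right]$ by the definition of the Mellin transform. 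Combining the two cases of $X_k$ with weights $1-\pFA$ and $\pFA$ yields \eqref{eq:mellinS_sybil}.

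I do not expect a serious obstacle in this proof: the mechanics are identical to Corollary~\ref{cor:mellin_steady}, and Theorem~\ref{thm:mellin_simo} already supplies the closed-form approximation for $\mathcal{M}_{g(\gamma_k)}$. The only conceptual point that warrants care is the independence argument for $N_k$, $\gamma_k$, and $X_k$, which should be stated explicitly but follows directly from the modeling assumptions in Sections~\ref{sec:phy_mac} and~\ref{sub:sybil_attack} (fading is independent across devices and frames, and the Sybil-induced distribution of $|\IDTP|$ is governed by Eve's attack attempts on inactive devices, not by device $i$'s channel).
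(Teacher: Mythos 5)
Your proposal is correct and follows essentially the same route as the paper's proof: condition on the Bernoulli indicator $X_k$ and on $N_k=n$, sum over $n$ with weights $p_N(n)$, and identify the inner expectation as $\mathcal{M}_{g(\gamma_k)}\left[1+\frac{n(s-1)}{\ln 2}\right]$ exactly as in Corollary~\ref{cor:mellin_steady}, with $p_X=1-\pFA$. Your explicit statement of the independence of $N_k$ from $\gamma_k$ and $X_k$ is the same assumption the paper's factorization uses implicitly, so nothing is missing.
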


\begin{proof}
Here, $h(\gamma_k,X_k)$ in \eqref{eq:h_function} is a function of the random variable $N_k$ as well. Following the same logic as in the proof of Corollary~\ref{cor:mellin_steady}, we find that

\equationspace
\begin{align*}
&\mathcal{M}_{h(\gamma_k,X_k,N_k)}(s)= \mathbb{E}_{\gamma_k,X_k,N_k}\left[h(\gamma_k,X_k,N_k)^{s-1}\right] \\
 & = p_X \sum_{n}\left[\mathbb{E}_{\gamma_k}\left[h(\gamma_k,1,n)^{s-1}\right]p_N(n)\right] + (1-p_X).
 \label{eq:mellinh_sybil}
\numberthis
\end{align*}
Similarly to Corollary~\ref{cor:mellin_steady}, we have

\equationspace
\begin{equation}
	\mathbb{E}_{\gamma_k}\left[h(\gamma_k,1,n)^{s-1}\right] = \mathcal{M}_{g(\gamma_k)}\left[1+\frac{n(s-1)}{\ln 2}\right],
\end{equation}
and by plugging this into \eqref{eq:mellinh_sybil} and again noting that $p_X = 1-\pFA$, the proof of \eqref{eq:mellinS_sybil} follows.
\end{proof}

\subsection{Analysis for Disassociation Attacks}
\label{subsec:mod_dis}

The modifications to the queueing model for disassociation attacks are described in Section~\ref{sec:disassociation_attack}. The delay bound in Section~\ref{sec:snc} applies to the block-aggregated service and arrival processes $s'_l$ and $a'_l$. However, with the redefinition of the time scale, we now have $p(w)=\Pr(W(t)>\KRC w)$. In the following, we present the Mellin transforms of the aggregated arrival and service process under these assumptions. Since we assume constant arrivals of $\alpha$ bits per frame, we simply have $a'_l=\KCN\alpha$ and ${\MelA}_{\text{,Disassociation}}(s) = e^{\KCN\alpha(s-1)}$. What remains is the Mellin transform of the aggregated service process, provided in the following corollary:

\begin{cor}
	For the $\KCN$ aggregated service process under a disassociation attack with success probability $p_d$, the Mellin transform is given by
	\equationspace	
	\begin{equation}
		{\MelS}_{\text{,Disassociation}}(s) = (1-p_d)\left[{\MelS}_{,\text{Baseline}}(s)\right]^{\KCN}+p_d,
	\end{equation}
	where ${\MelS}_{,\text{Baseline}}(s)$ is the Mellin transform for the baseline scenario given by \eqref{eq:mellin_steady}.
\end{cor}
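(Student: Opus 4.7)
My plan is to apply the law of total expectation, conditioning on the block-level disassociation indicator $D_l$, and then exploit the independence of the per-frame services across a block to factorize the resulting expectation.

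First, I would translate the block service into the SNR domain. From \eqref{eq:dis_service_process}, when $D_l=0$ the block service is a sum of $\KRC$ baseline per-frame contributions, so in the SNR domain $\mathcal{S}'_l = e^{s'_l}$ factorizes as a product of $\KRC$ per-frame services $e^{s_k}$; when $D_l=1$, we have $s'_l=0$ and hence $\mathcal{S}'_l = 1$. Substituting these two cases into the Mellin-transform definition $\mathcal{M}_{\mathcal{S}'_l}(s) = \mathbb{E}\bigl[(\mathcal{S}'_l)^{s-1}\bigr]$ and splitting by the value of $D_l$, with $\Pr(D_l=1)=p_d$ as in \eqref{eq:p_d_l}, directly yields
\begin{equation*}
\MelS_{,\text{Disassociation}}(s) = (1-p_d)\,\mathbb{E}\!\left[\prod_{k=\KRC l}^{\KRC(l+1)-1} e^{s_k(s-1)}\right] + p_d.
\end{equation*}

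Second, I would invoke the per-frame independence that is already built into the baseline setup: the fading realizations $\h_{k,i}$ are independent across frames and the false-alarm indicators $X_k$ are independent Bernoulli variables. Consequently the in-block $s_k$ are i.i.d., and the expectation of the product factorizes into $\KRC$ identical factors, each equal to $\mathbb{E}\bigl[e^{s_k(s-1)}\bigr] = \MelS_{,\text{Baseline}}(s)$ by Corollary~\ref{cor:mellin_steady}. Combining this with the previous display gives the stated closed form.

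The main obstacle is not computational but rather a small independence argument that needs to be stated explicitly: $D_l$ must be independent of the $\KRC$ in-block per-frame services $s_k$. This follows from the attacker model of Section~\ref{sec:disassociation_attack}, where $D_l$ is determined solely by Eve's Bernoulli attack attempts and by the authentication outcome on Eve's fabricated DCN message (captured through $\pMD^{L=2}(i,T)$ in \eqref{eq:p_d_l}), both of which are driven by Eve's channel and attack process. In contrast, $\gamma_k$ and $X_k$ inside the block reflect the legitimate device's data-payload transmissions and its own false-alarm events, which are independent of Eve's MGMT-layer activity. Once this independence is granted, the argument is the same conditioning-plus-factorization step that underlies Corollary~\ref{cor:mellin_steady}, and no new analytical machinery is needed.
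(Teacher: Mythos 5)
Your proposal is correct and follows essentially the same route as the paper's proof: condition on the block-level indicator $D_l$, write the SNR-domain block service as a product of the i.i.d.\ per-frame factors $e^{s_k}$ when $D_l=0$ (and as $1$ when $D_l=1$), and factorize the expectation into $\KCN$ copies of the baseline Mellin transform. Your explicit remark that $D_l$ is independent of the in-block $(\gamma_k, X_k)$ is a point the paper leaves implicit, but it does not change the argument.
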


\begin{proof}
We note that ${\MelS}_{\text{,Disassociation}}(s)$ is given by

\equationspace
\begin{align*}
	 \E[e^{s'_l(s-1)}] & =(1-p_d)\E\left[\left(\prod_{k=\KCN l}^{\KCN (l+1)-1}h(\gamma_k,X_k)\right)^{s-1}\right] + p_d \\
	&= (1-p_d)\left[{\MelS}_{,\text{Steady}}(s)\right]^{\KCN} + p_d,
\end{align*}
where we have used that $h(\gamma_k,X_k)$ is independent for each $k$.

\end{proof}

\vspace{-3ex}
\section{Numerical Results}
\label{sec:results}
\newcommand\resultFigWidth{.325\textwidth}
\newcommand\subFigWidth{.330\textwidth}

In this section, we use our analytical results to study a network consisting of $K_d=24$ MTC devices deployed in a square 20 m $\times$ 20 m grid, one access point placed at the origin, and the adversary Eve positioned either outside the network or representing a compromised device within the network (see Fig.~\ref{fig:simulation_deployment} for an example deployment). The network is operating at carrier frequency $f_c = 2.4$ GHz and the access point antenna array has normalized antenna separation $\Delta_r=0.5$ and is oriented parallel to the line $y=-x$. We assume that all channel covariance-matrices are of the form $[\A]_{i,j} = \rho^{|i-j|}$ where $\rho$ is a correlation coefficient. The delay performance bounds are upper bounds on the delay violation probability $p_i(w) < p_{i,\text{Bound}}(w)$ computed as described in Section~\ref{sec:snc}, where the minimization over $s$ is carried out by a grid search. To specify the arrival rates $\alpha_i$, we compute the rate corresponding to a fixed server utilization, defined as $u=\frac{\E(a_k)}{\E(s_k)}$.

\begin{figure*}
\centering
\begin{subfigure}{\subFigWidth}
\psfrag{X}[][]{\scriptsize $x$ [m]} %
\psfrag{Y}[][]{\scriptsize $y$ [m]} %
\includegraphics[width=\columnwidth]{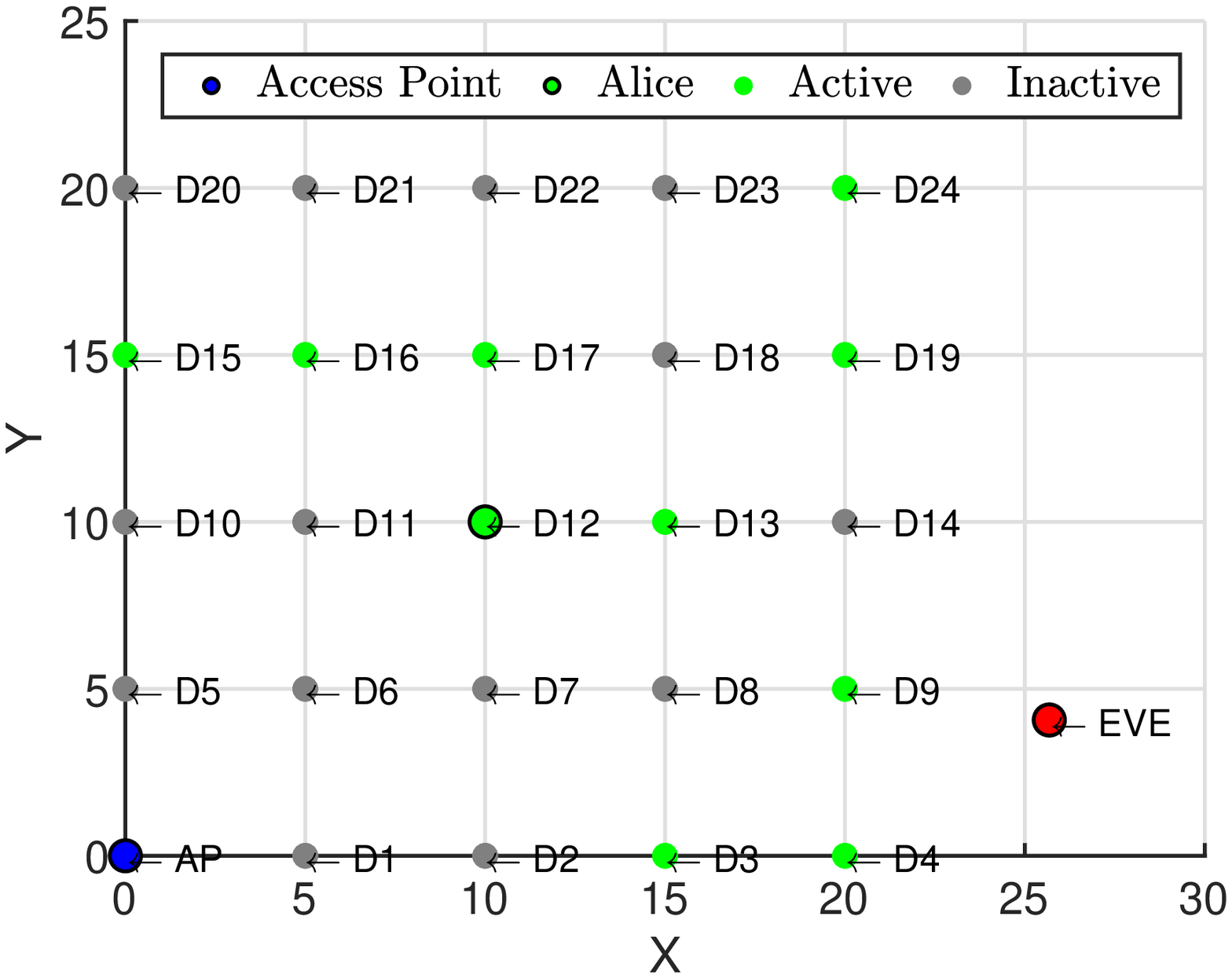}
\caption{}
\label{fig:simulation_deployment}
\end{subfigure}
\begin{subfigure}{\subFigWidth}
\psfrag{X}[][]{\scriptsize Delay [frames]} %
\psfrag{Y}[][]{\scriptsize Delay Violation Probability} %
\includegraphics[width=\columnwidth]{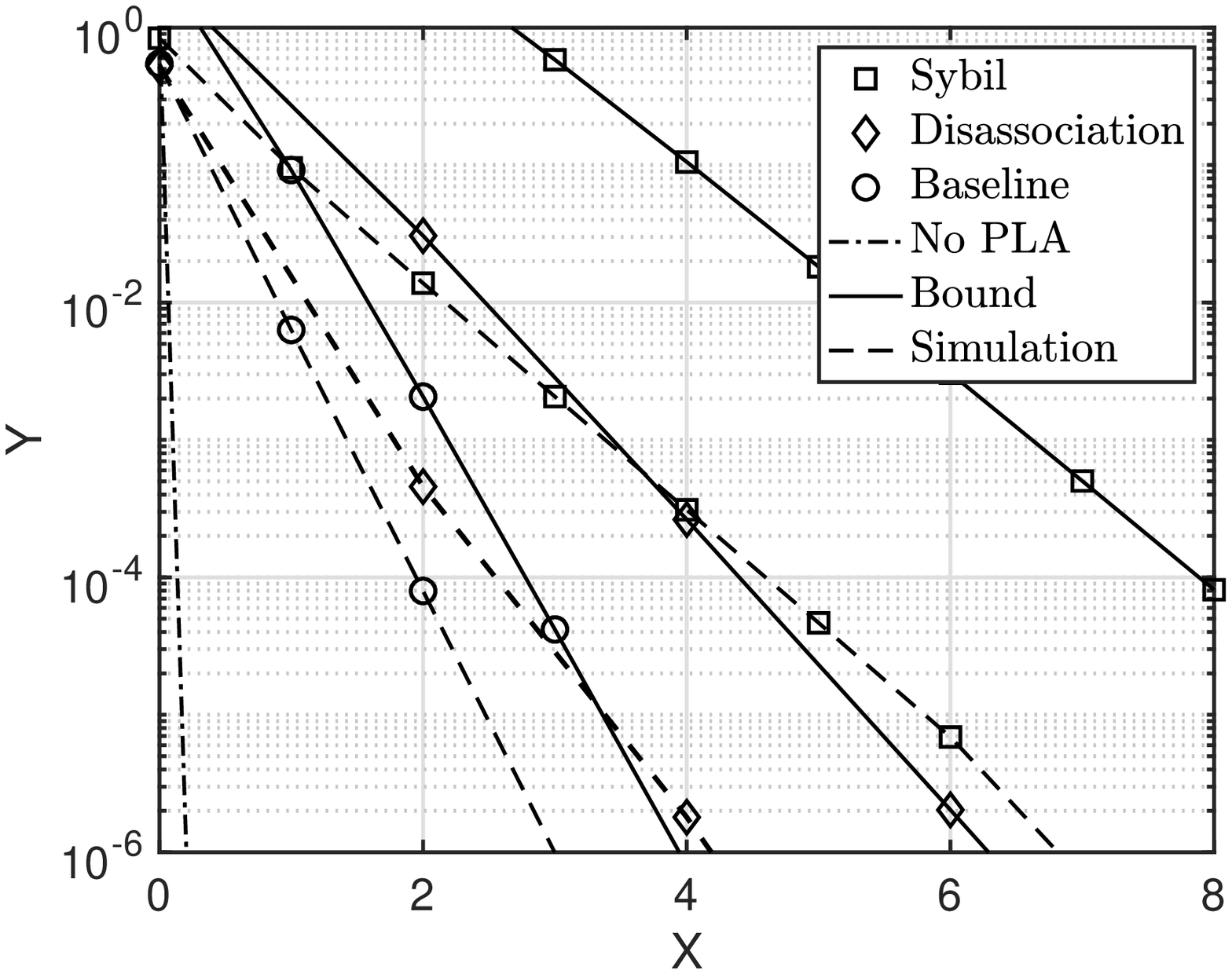}
\caption{}
\label{fig:r_bound_validation}
\end{subfigure} 
\begin{subfigure}{\subFigWidth}
\psfrag{X}[][]{\scriptsize False Alarm Rate} %
\psfrag{Y}[][]{\scriptsize $w_\epsilon$ [frames]} %
\psfrag{A}{\tiny \colorbox{white!30}{$u=0.9$}} %
\psfrag{B}{\tiny \colorbox{white!30}{$u=0.5$}} %
\includegraphics[width=\columnwidth]{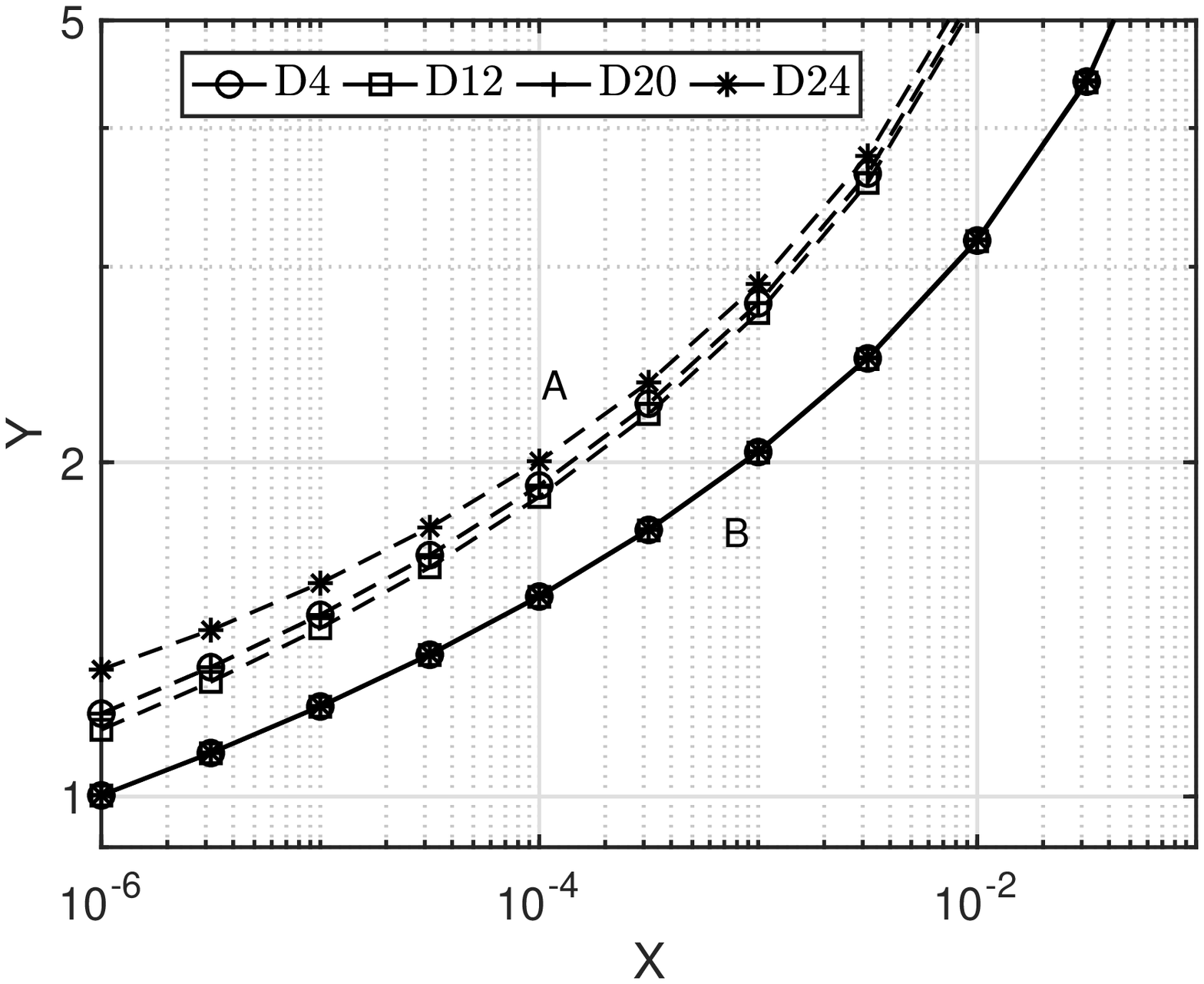}
\caption{}
\label{fig:r_steady_state_delay_vs_far}
\end{subfigure}  \vspace{-1ex}
\caption{(a) Considered MTC deployment. (b) Comparison of link-level simulations to derived bounds for device D12. (c) Delay guarantee $w_\epsilon$ with $\epsilon = 10^{-6}$ for device D12 in baseline scenario.}
\vspace{-2ex}
\end{figure*}

\vspace{-1ex}
\subsection{Bound Validation}
In Fig.~\ref{fig:r_bound_validation}, we show the delay violation probability for device D12, evaluated through link-level simulations, together with the corresponding bounds. In this figure, $\KRice = 6$ dB, $\rho=0$, $\NRx =4$ and we have included results with PLA in the baseline scenario, under Sybil attack, and under disassociation attack, as well as results without PLA in the baseline scenario. For the Sybil attack, we assume Eve launches $|\Dsybil|=4$ Sybil IDs, and for the disassociation attack we assume the reconnection time is $\KCN = 4$ frames. We can observe that in each scenario the analytical bounds and simulation results follow the same slope with a gap of 1-3 orders of magnitude between the curves; hence, our analysis can validly upper bound the performance of the modeled system. Typically, the gap between simulation and bound increases with the slope of the curves. Therefore, the derived bounds will clearly overestimate the true delay violation probability as seen in Fig.~\ref{fig:r_bound_validation}. However, they provide us with an efficient (i.e., in the sense that computing the bounds is significantly less computationally demanding than simulating the system) and conservative (i.e., the true system will perform considerably better than the bounds predict) way of evaluating the system's delay performance.

\vspace{-0.5ex}
\subsection{Baseline Performance}

Fig.~\ref{fig:r_steady_state_delay_vs_far} illustrates the delay $w_{\epsilon}$ that can be analytically guaranteed with a violation probability of $\epsilon=10^{-6}$, i.e., $p_{i,\text{Bound}}(w_{\epsilon}) = \epsilon$, for a given false alarm rate. For illustration, we consider only a subset of devices covering the full range of AoAs and distances. The value of $w_\epsilon$ varies little between devices. This is beacuse arrival rates are adapted differently for each device to fix the utilization $u$. Fig.~\ref{fig:r_steady_state_delay_vs_far} also illustrates how PLA impacts the system: to get a low missed detection rate we typically want to have a low threshold $T$. However, decreasing $T$ increases the false alarm rate, which clearly impacts the delay performance guarantee $w_\epsilon$. For this particular scenario, we see that a false alarm rate approaching $10^{-2}$ can have an impact of 2-5 frames on the delay guarantee. Higher utilization in Fig.~\ref{fig:r_steady_state_delay_vs_far} means that the arrival rates are higher, resulting in an increased delay guarantee. However, we observe that the delay shows a similar behavior with the false alarm rate for both $u=0.5$ and $u=0.9$. 

\subsection{Data Injection Attacks}

Here, we consider the detection performance of PLA in the data injection attack. Delay impacts of data injection are not studied in this case since these are similar to the Sybil attack, as discussed in Section~\ref{sec:data_injection}.
\paragraph{Inactive device targeted}

Fig.~\ref{fig:r_MDR} shows the analytical missed detection rate $\pMD(i,T)$ when Eve launches a data injection attack against an inactive device. In these figures, we assume that Eve is positioned at $(x,y) = (25,0)$ [m], targeting devices $\{\text{D}4,\text{D}8,\text{D}12,\text{D}16,\text{D}20\}$, and that the PLA threshold is fixed at a false alarm rate $\pFA=10^{-2}$. Fig.~\ref{fig:r_mdr_vs_RiceK} depicts the missed detection rate for varying $\KRice$ with fixed $\KRiceE=0$ dB. As expected, the missed detection rate improves with stronger LOS component. We observe that the detection performance for D4 is worse due to its location at $(0,20)$ close to Eve. Additionally, we can observe that higher antenna correlation has a positive effect on the missed detection rate performance. Fig.~\ref{fig:r_mdr_vs_RiceK_E} shows the influence of $\KRiceE$ on the missed detection rate for fixed $\KRice=5$ dB. For PLA of devices far from Eve, a stronger LOS component from Eve allows the access point to better differentiate messages from Eve. However, for device D4 the missed detection rate shows the opposite behavior since Eve's channel more and more resembles the legitimate channel. We can also see that for low $\KRiceE$ (i.e, Eve's channel is approaching NLOS), the missed detection rate approaches the same value for all choices of devices to impersonate. In Fig.~\ref{fig:r_mdr_vs_NRx}, we plot the missed detection rate for varying $\NRx$ showing that the missed detection rate follows an approximately log-linear decrease with $\NRx$.

\vspace{-1ex}
\begin{figure*}
\centering
\begin{subfigure}{\subFigWidth}
\psfrag{X}[][]{\scriptsize $\KRice$ [dB]} %
\psfrag{Y}[][]{\scriptsize Missed Detection Rate} %
\includegraphics[width=\columnwidth]{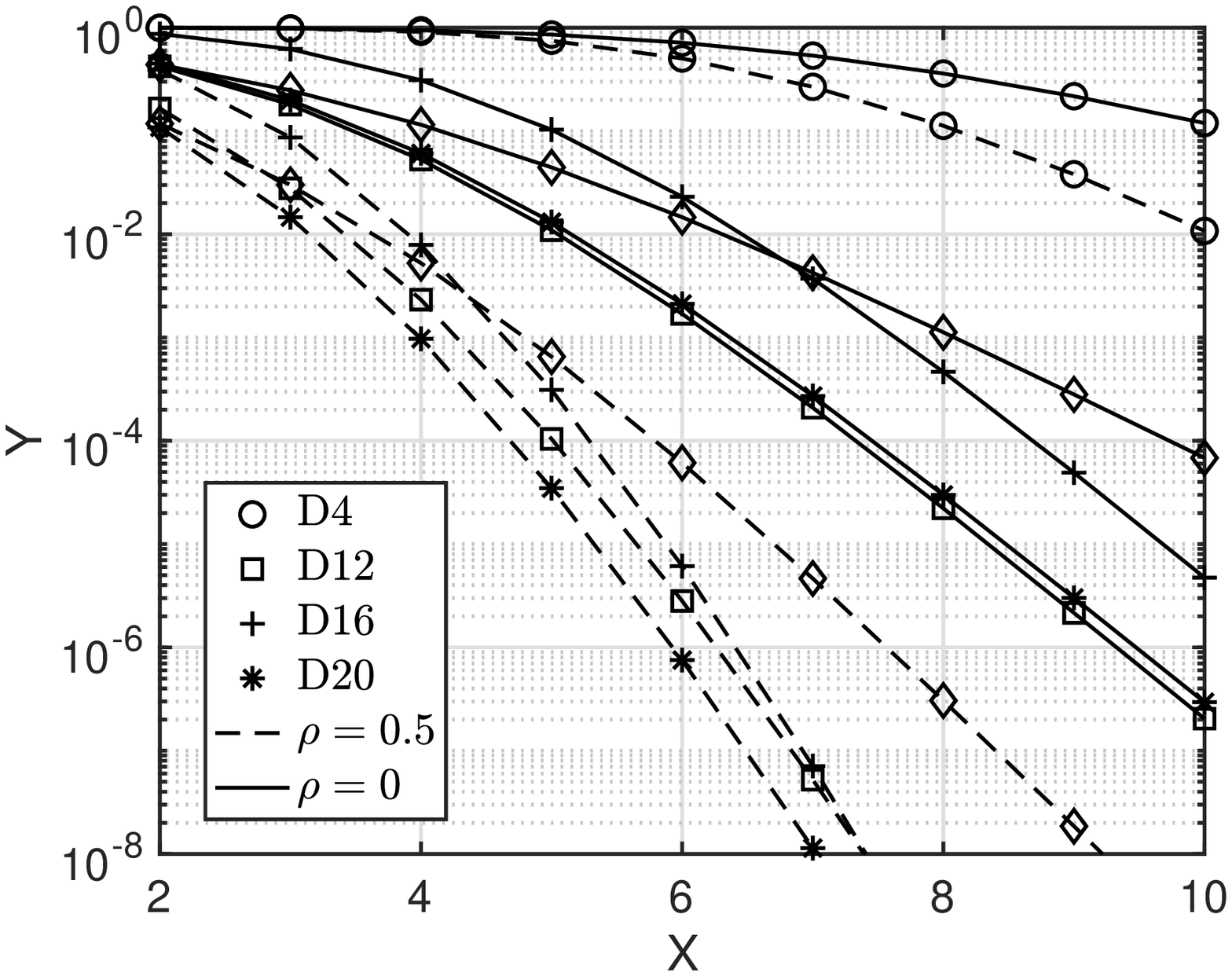}
\caption{}
\label{fig:r_mdr_vs_RiceK}
\end{subfigure}
\begin{subfigure}{\subFigWidth}
\psfrag{X}[][]{\scriptsize $\KRiceE$ [dB]} %
\psfrag{Y}[][]{\scriptsize Missed Detection Rate} %
\includegraphics[width=\columnwidth]{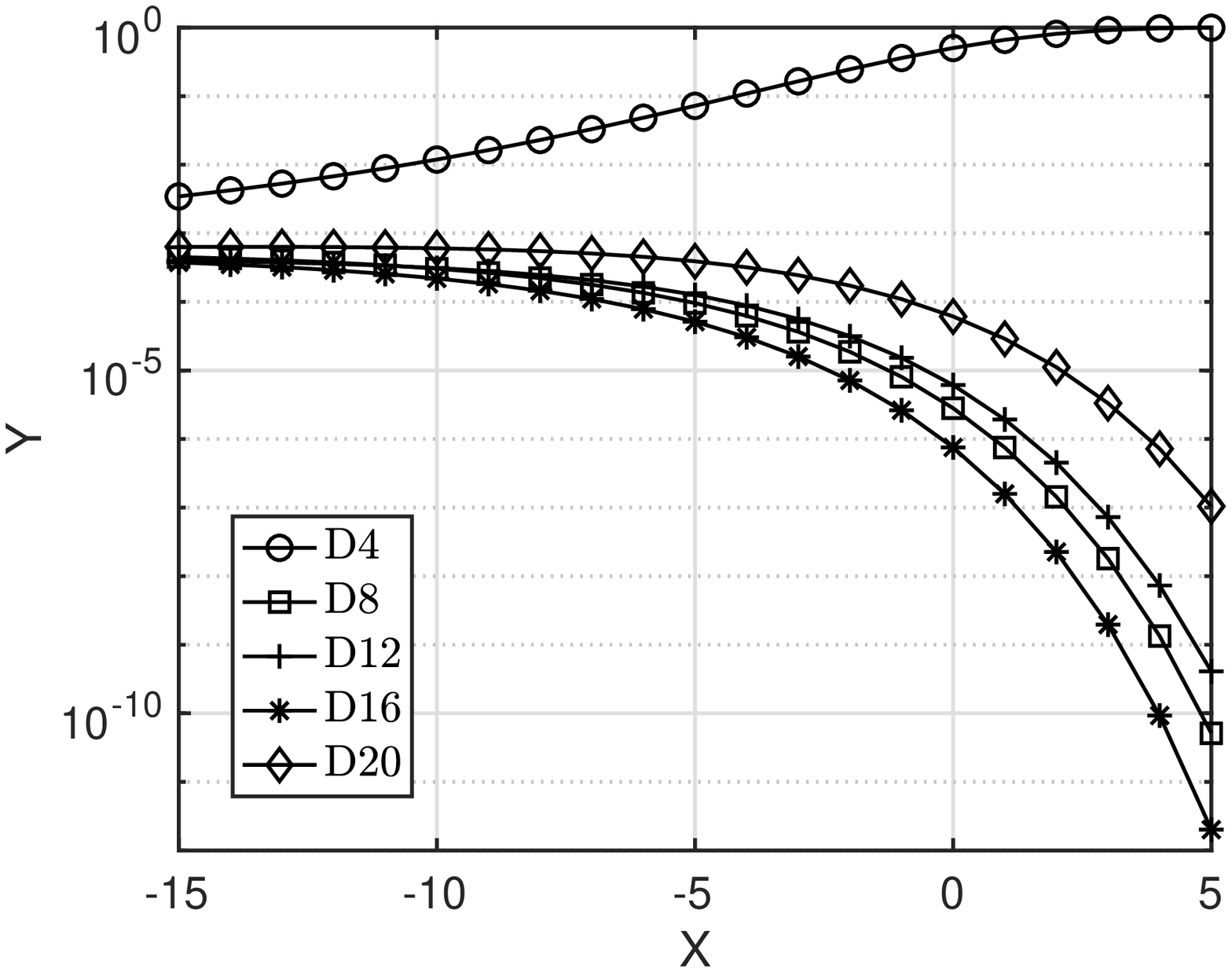}
\caption{}
\label{fig:r_mdr_vs_RiceK_E}
\end{subfigure} 
\begin{subfigure}{\subFigWidth}
\psfrag{X}[][]{\scriptsize $\NRx$} %
\psfrag{Y}[][]{\scriptsize Missed Detection Rate} %
\includegraphics[width=\columnwidth]{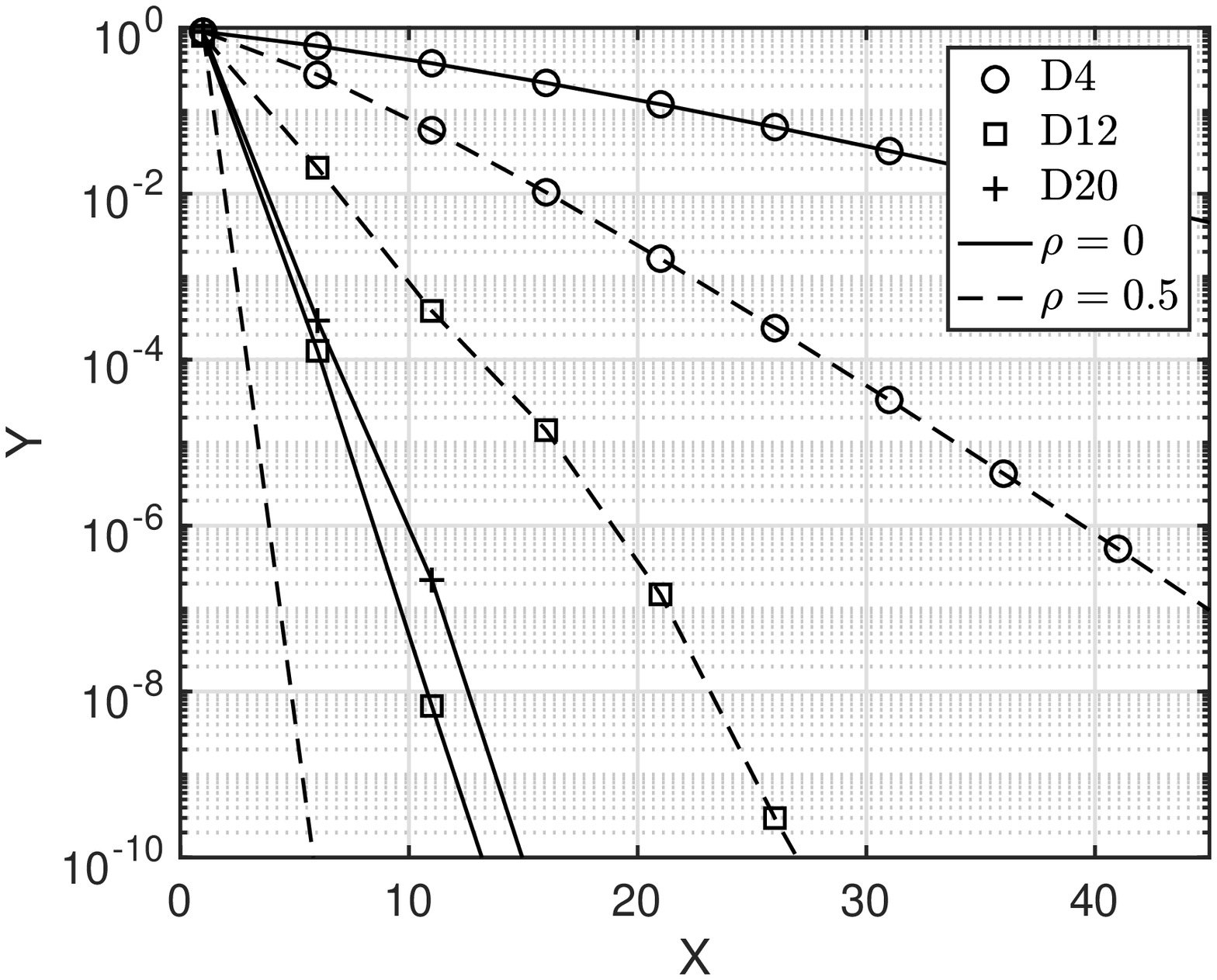}
\caption{}
\label{fig:r_mdr_vs_NRx}
\end{subfigure}  \vspace{-1ex}
\caption{PLA detection performance under data injection attack with $\NRx = 8$ when Eve impersonates $\{\text{D}4,\text{D}8,\text{D}12,\text{D}16,\text{D}20\}$: (a) For varying LOS strength, (b) for varying attacker LOS strength, and (c) for varying number of receive antennas.}
\vspace{-2ex}
\label{fig:r_MDR}
\end{figure*}

\paragraph{Active device targeted}

Fig.~\ref{fig:r_pe_vs_AoA} shows the missed detection rate $\pMD^{L=2}$ when Eve targets an active device from $d_E=30$ m and varies her AoA from $\pi/4$ to $3\pi/4$. The first upper bound corresponds to $\pMD(T,i)$, the second to $p_d(i)$, the lower bound correspond to $\pMD(T,i)\pFA(T)$ (see \eqref{eq:upper_lower_p_e} and \eqref{eq:p_e_upper} in Section~\ref{sec:data_injection}), and the solid curve is generated by Monte Carlo simulation. We can observe that the gap between the tightest upper bound and the true value is around 1 order of magnitude. Additionally, this figure illustrates that there is an optimal AoA for Eve to impersonate this particular device with the highest success rate. In Fig.~\ref{fig:r_pe_optimal_AoA_v2}, we depict the upper bound on $\pMD^{L=2}$, for each device, when Eve is choosing the optimal AoA. Note that this is the upper bound and that the actual detection performance is around one order of magnitude lower. We observe that for devices D1, D5 and D6, the missed detection rate is very low ($<10^{-8}$), while the upper bound can approach values higher than $10^{-1}$ for some poorly positioned devices. Also, we observe that generally, the missed detection rate is improved when Eve only has a NLOS channel (i.e., $\KRiceE = -\infty$ dB). As a single-antenna attacker, we note that it is impossible for Eve to estimate the optimal AoA through eavesdropping communications. Though through knowledge of the deployment, Eve can position herself at a similar LOS path as the legitimate device to optimize her chances of success. However, if Eve's objective is to impersonate several devices simultaneously, the optimal AoA becomes conflicting as illustrated by Fig.~\ref{fig:r_pe_optimal_AoA_v2}.

These results highlight two variables affecting the detection performance of PLA for a given false alarm rate: (i) network deployment and environment affecting LOS strengths for legitimate channels and for Eve; and (ii) access point design in terms of amount and placement of antennas. It is clear that we can improve the missed detection rate by adding more antennas and placing them such that antenna-correlation is high. Moreover, by designing the deployment and the immediate environment such that devices have a strong LOS path to the access point, while Eve is unable to get a strong LOS path (e.g., through deployment of the system in a closed environment), we can improve detection performance. Influencing channel characteristics for improved PLA performance might be feasible in some scenarios (e.g., in a factory deployment). Moreover, deployments with strong LOS components might be desirable for pure communication reasons as well.

\vspace{-1ex}
\begin{figure*}
\centering
\begin{subfigure}{\subFigWidth}
\psfrag{X}[][]{\scriptsize Attacker AoA [$\pi$ rad]} %
\psfrag{Y}[][]{\scriptsize $\pMD^{L=2}$} %
\includegraphics[width=\columnwidth]{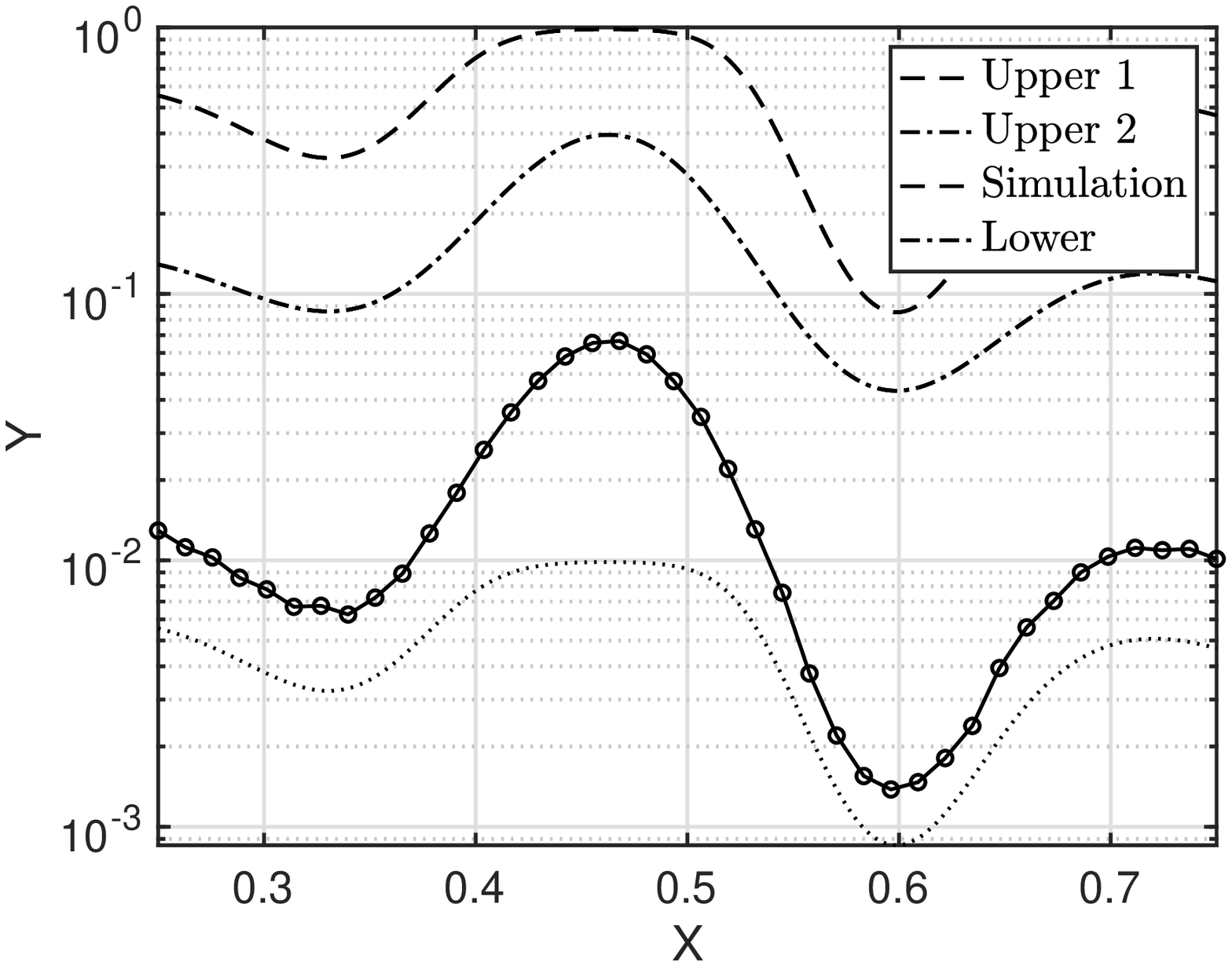}
\caption{}
\label{fig:r_pe_vs_AoA}
\end{subfigure}\hspace{10ex}
\begin{subfigure}{\subFigWidth}
\psfrag{X}[][]{\scriptsize Device Index} %
\psfrag{Y}[][]{\scriptsize $\pMD^{L=2}$} %
\psfrag{Z}[][]{\scriptsize Optimal AoA [$\pi$ rad]} %
\includegraphics[width=\columnwidth]{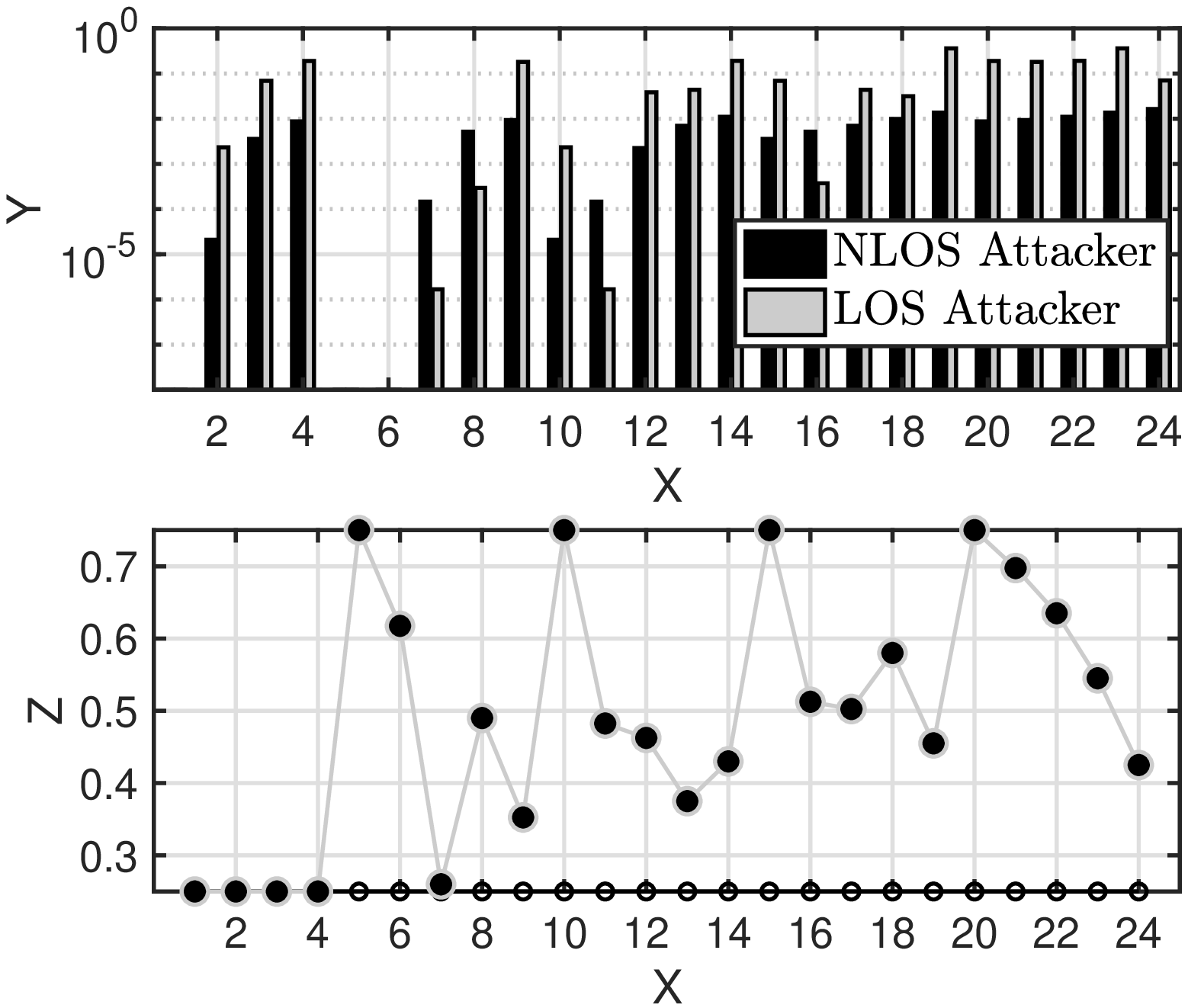}
\caption{}
\label{fig:r_pe_optimal_AoA_v2}
\end{subfigure}  \vspace{-1ex}
\caption{Data injection attack when Eve impersonates an active device: (a) missed detection rate vs. attacker AoA when D12 is targeted, $\KRice = 6$ dB and $\KRiceE=0$ dB, (b) Upper bound on missed detection rate and Eve's optimal AoA.}
\vspace{-2ex}
\end{figure*}

\vspace{-0.5ex}
\subsection{Sybil Attacks}

Here, we assume that devices $\Dactive =\{\text{D}12,\text{D}13,\text{D}14,\text{D}17,\text{D}18,\text{D}19,\text{D}22,\text{D}23,\text{D}24\}$ (i.e., the upper-right quadrant of the deployment) are active and that device D4 has been compromised and is launching a Sybil attack. Fig.~\ref{fig:r_sybil_average_sybil_nodes} shows $\E[\Ksybil]$, the average number of Sybil nodes successfully launched by Eve, as a function of the number of targeted devices $|\Dsybil|$. The solid lines are computed according to our approximate distribution \eqref{eq:k_sybil_approx_pmf}, while the dashed lines correspond to simulation results, showing that our approximation is accurate. We see that with no PLA, Eve successfully gets every Sybil ID accepted. With PLA and lower $\pFA$, the number of successful Sybil nodes is kept lower. For instance, when $\pFA = 10^{-2}$, the expected number of Sybil nodes does not exceed $\E[\Ksybil]>2$ even though Eve can launch up to $|\Dsybil|=14$ Sybil IDs, which means that PLA is effective against the attack. However, it is apparent from Fig.~\ref{fig:r_sybil_average_sybil_nodes} that there are Sybil IDs that cannot be detected by the PLA. The reason is that in the particular scenario that we have investigated, Eve is device D4, and hence, more easily impersonates devices $\{\text{D}1,\text{D}2,\text{D}3\}$ due to having the same AoA. Fig.~\ref{fig:r_sybil_delay_vs_sybil_nodes} shows the delay guarantee $w_\epsilon$ for D12 and $\epsilon=10^{-6}$ under the Sybil attack. We can observe that without PLA, the increasing number of Sybil IDs launched by Eve has a severe effect on the delay performance. For example, when the link utilization is high ($u=0.9$), Eve only has to introduce 4-5 Sybil IDs to cause the delay in the queue to grow towards infinity. On the other hand, by effectively detecting the Sybil IDs with PLA with $\pFA=10^{-2}$, the delay performance can be made almost independent of the number of Sybil IDs, at a cost of a constant higher delay of around 3 frames.

\vspace{-1ex}
\begin{figure*}
\centering
\begin{subfigure}{\subFigWidth}
\psfrag{A}[][]{\scriptsize $|\Dsybil|$} %
\psfrag{B}[][]{\scriptsize $\E[\Ksybil]$} 
\includegraphics[width=\columnwidth]{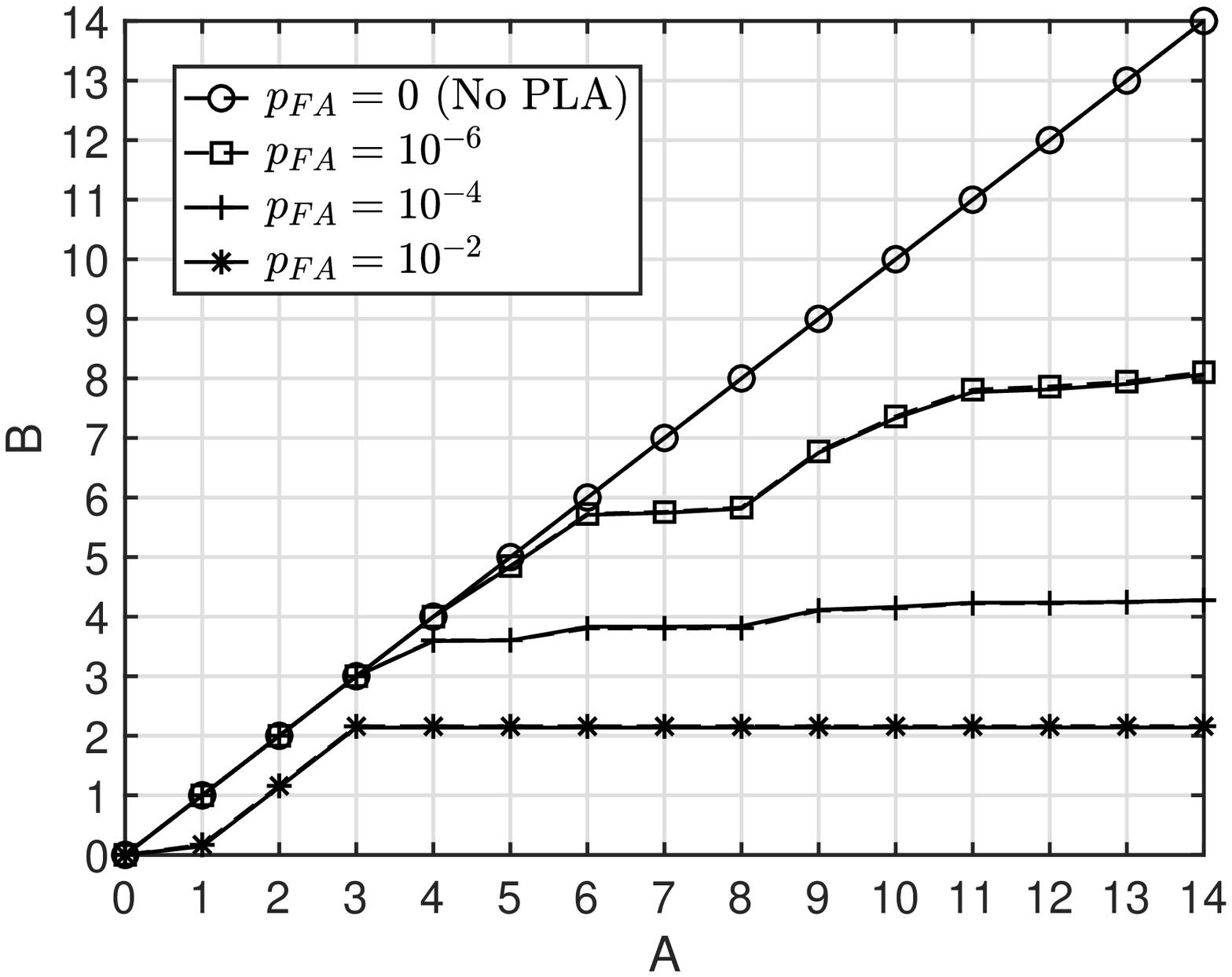}
\caption{}
\label{fig:r_sybil_average_sybil_nodes}
\end{subfigure}
\begin{subfigure}{\subFigWidth}
\psfrag{X}[][]{\scriptsize $|\Dsybil|$} %
\psfrag{Y}[][]{\scriptsize $w_\epsilon$ [frames]} %
\psfrag{B}[][]{\colorbox{white!30}{\tiny$\pFA=10^{-2}$}}
\psfrag{C}[][]{\colorbox{white!30}{\tiny No PLA}} %
\psfrag{D}{\tiny $w_\epsilon \rightarrow \infty$} %
\includegraphics[width=\columnwidth]{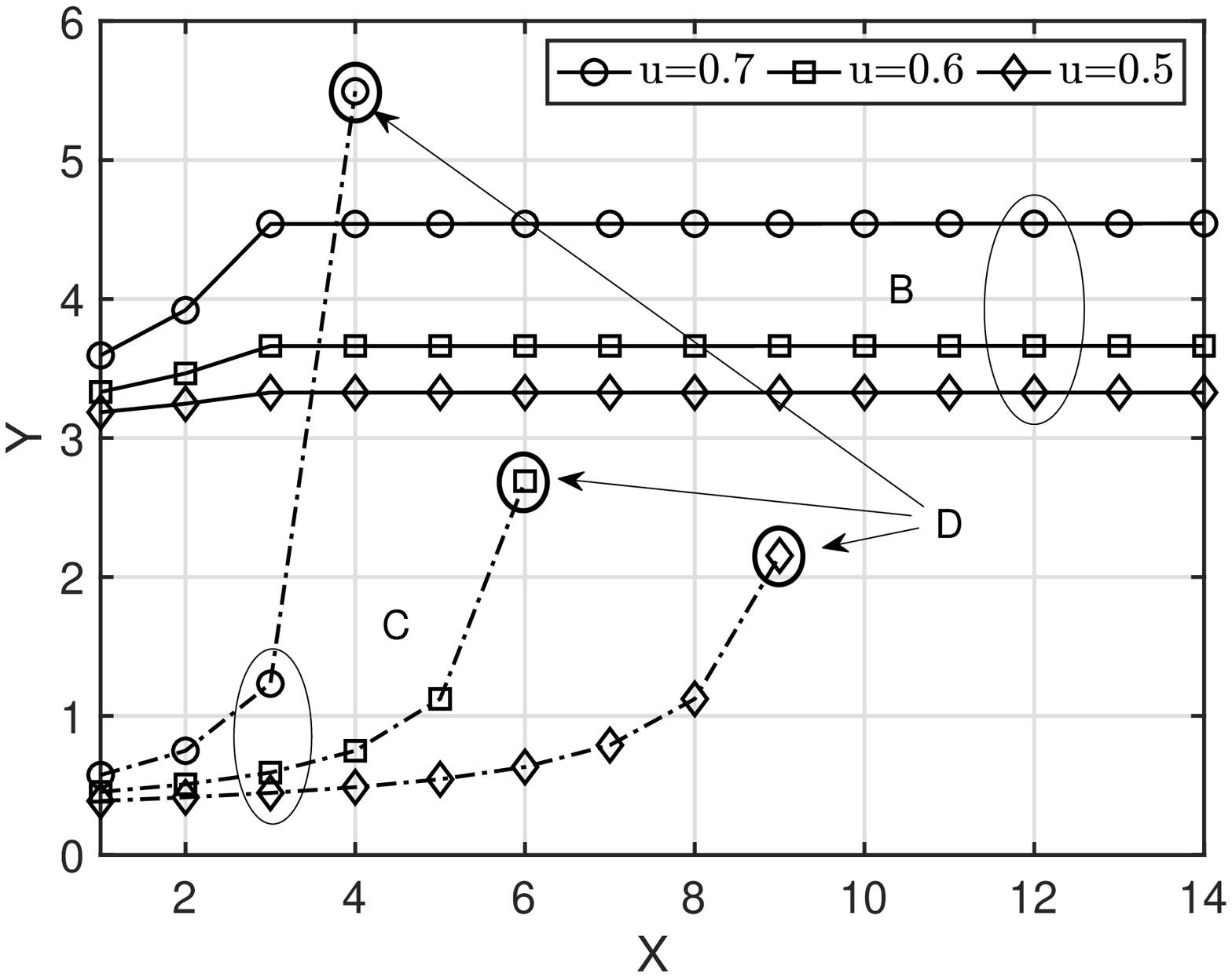}
\caption{}
\label{fig:r_sybil_delay_vs_sybil_nodes}
\end{subfigure} 
\begin{subfigure}{\subFigWidth}
\psfrag{X}[][]{\scriptsize $\pAttack$} %
\psfrag{Y}[][]{\scriptsize $w_\epsilon$ [frames]} %
\psfrag{A}[][]{\colorbox{white!30}{\tiny $\NRx=4$}} %
\psfrag{B}[][]{\colorbox{white!30}{\tiny $\NRx=8$}} %
\psfrag{C}{\colorbox{white!30}{\tiny No PLA}} %
\includegraphics[width=\columnwidth]{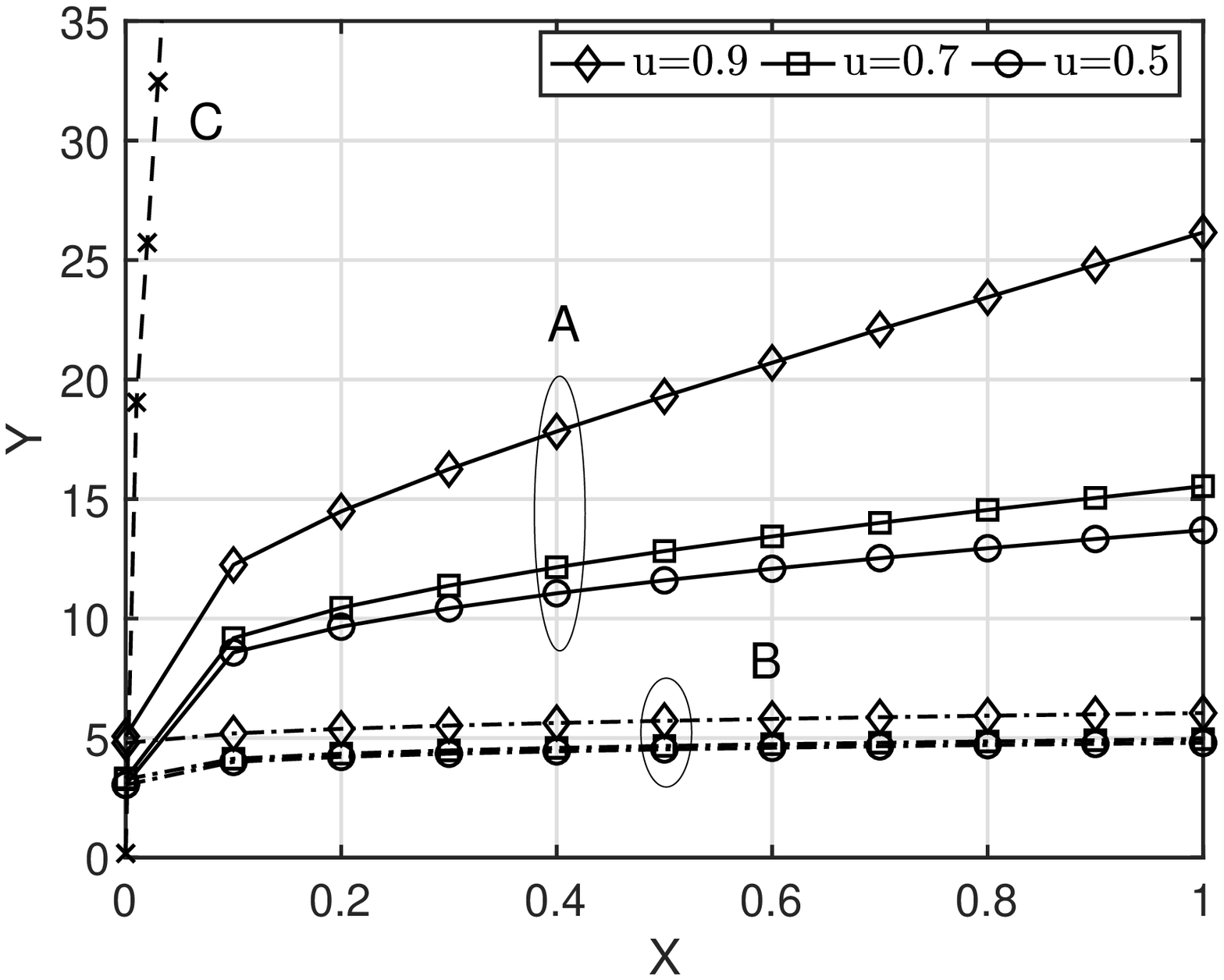}
\caption{}
\label{fig:r_dis_delay_vs_attack_prob}
\end{subfigure} \vspace{-1ex}
\caption{(a) Expected number of successful Sybil IDs $\E[\Ksybil]$ for various choices of $\pFA$. (b) Delay performance impacts for D12 under Sybil attack. (c) Delay performance impacts for D12 under disassociation attack.}
\vspace{-2ex}
\end{figure*}
\vspace{-1ex}
\subsection{Disassociation Attacks}

Here, we assume that Eve is an external entity, positioned at $d_E=25$ m and $\Phi_E = \pi/3$, launching a disassociation attack targeting device D12. Fig.~\ref{fig:r_dis_delay_vs_attack_prob} shows the delay guarantee $w_\epsilon$ for $\epsilon=10^{-6}$ for the targeted device as a function of the attack probability. In the results without PLA, we have assumed that the access point performs random guessing whenever two requests are received at the same time. For this case, we can clearly see that the attack causes the delay to increase for very low attack intensities, simply because the device gets disconnected 50\% of the times Eve sends a DCN request. With PLA the impact is reduced and the system is able to give delay guarantees even when $\pAttack \rightarrow 1$ and Eve is sending DCN requests in every frame. For $\NRx = 4$, we however see an increase in $w_\epsilon$ with $\pAttack$ due to the occasional missed detections. Fig.~\ref{fig:r_dis_delay_vs_attack_prob} also illustrates that this increase can be mitigated by increasing the number of receive antennas to $\NRx =8$.

\subsection{Discussion}

The PLA scheme studied in this paper can achieve missed detection rates of around $10^{-6}$ and even as low as $10^{-10}$ under certain channel conditions.
These values can certainly meet security requirements even in applications where message integrity is of critical importance. 
Our results also indicate that these security enhancements comes at a limited cost in terms of delay:
In the baseline scenario, given reasonable false alarm rates $\pFA<10^{-2}$, our results show that a delay $w_\epsilon<5$ frames can be guaranteed with a reliability of $\epsilon=10^{-6}$. Furthermore, PLA can assure reliable operation of the studied system, even under hostile scenarios like Sybil and disassociation attacks that are aimed at impacting the service on legitimate channels. 
All these observations make a strong case for that PLA can be a viable option for message integrity in mission-critical MTC.

Introducing multiple-antenna access points appear beneficial from both a PLA security and a delay perspective. 
Already with 4-8 receive antennas, we observe large benefits in terms of missed detection rate that continues to improve in a log-linear fashion. 
The benefits of introducing more antennas at the access point can be interpreted in two ways: (i) improved detection performance for a given false alarm rate, or (ii), improved false alarm rate for a given detection performance. 
That is, we can utilize extra antennas either to strengthen the integrity of communication, or to reduce the delay impacts (i.e., decrese false alarm rate).
Our results also provide insight into deployment strategies: We have seen that if many devices are deployed along a straight line resulting in similar AoA profiles, an external or internal attacker will be more effective in impersonating this set of devices simultaneously. Hence, if the deployment of MTC devices can be influenced for security purposes, this can be used to make sure that Sybil attacks targeting many devices are unlikely to succeed. Furthermore, if certain devices transmit particularly sensitive information, these can be placed in positions such that Eve's success-rate when impersonating is minimized. 

\section{Conclusions}
\label{sec:conclusion}

We have studied delay impacts of a feature-based PLA protocol in order to investigate the viability of PLA for mission-critical MTC applications. 
Based on a MTC network model consisting of multiple devices and a multi-antenna access point we have derived delay performance bounds that quantify the delay impacts of PLA. 
Evaluation of the derived bounds for a network with a square-grid deployment of 24 MTC devices shows that PLA can, under good LOS conditions, be used without introducing excessive delays.
Additionally, we have found that PLA allows low-latency high-reliability communication even under hostile attack scenarios such as Sybil and disassociation attacks.
As a means of improving detection and delay performance, one could consider multiple antenna-arrays deployed at separate locations in a distributed manner. 
Additionally, in this paper we have limited the analysis to a single-antenna adversary; however, this could be extended to several adversaries with multiple antennas. 
Moreover, channel estimation techniques and their effect on the queueing model and authentication performance is still an open problem.
Finally, our analysis could easily be modified to encompass other authentication schemes (e.g., based on other features or fingerprinting tags) and through this be used to compare different PLA schemes from a delay perspective.

\vspace{-1ex}
\footnotesize
\bibliography{my_bibliography.bib}

\end{document}